\newtheorem{theorem}{Theorem}{}
\newtheorem{assumption}{Assumption}{}
\newtheorem{lemma}{Lemma}{}
\newtheorem{proposition}{\textbf{Proposition}}
\newtheorem{remark}{Remark}{}
\newenvironment{proof}{{\noindent\it Proof.}\quad}{\hfill $\square$\par}
\begin{document}
\title{Distributed Leader-Follower Formation Tracking Control of Multiple Quad-rotors}
%\author{Tanmoy Mondal}
%\author{Shyamal Biswas}\email{sbsp [at] uohyd.ac.in}
%\affiliation{School of Physics, University of Hyderabad, C.R. Rao Road, Gachibowli, Hyderabad-500046, India}
%
%\author{Shyamal Biswas$^{1}$}\email{sbsp [at] uohyd.ac.in}
%\author{Tanmoy Mondal$^{1,2}$}
%\affiliation{$^{1}$School of Physics, University of Hyderabad, C.R. Rao Road, Gachibowli, Hyderabad-500046, India\\
%$^{2}$Present Address: Centre for Theoretical Studies, IIT-Kharagpur, Kharagpur-721302, India}
\author{Lixia Yan}
%\ead[url]{www.elsevier.com}
\thanks{yanlixia@buaa.edu.cn;mabaoli@buaa.edu.cn}
%\author[mysecondaryaddress]{Global Customer Service\corref{mycorrespondingauthor}}
%\cortext[mycorrespondingauthor]{Corresponding author}
%\ead{support@elsevier.com}
\author{Baoli Ma}
%\cortext[mycorrespondingauthor]{Corresponding author}

\address{The Seventh Research Division, School of Automation Science and Electrical Engineering, Beihang University, 100191, Beijing, China}

\date{\today}

\begin{abstract}
The leader-follower formation control analysis for multiple quad-rotor systems is investigated in this paper. To achieve predefined formation in the three-dimensional air space ($x,y$ and $z$), a novel local tracking control law and a distributed observer are obtained. The local tracking control law starts with finding a bounded continuous yet greater-than-zero control in $z$, based on which following a feedback linearization (FL) controls derived for errors associated with $x$ and $y$. By this design method, we obtain less states to be regulated than the traditional extension FL methodology. Then, the proposed distributed observer solves the problems that only a subset of followers can know the leader’s states and only neighboring communication is available. Simulation results validate the proposed formation scheme.
\end{abstract}
\maketitle

%\tableofcontents

\section{Introduction}
A quad-rotor is a multi-rotor helicopter that is lifted and propelled by four symmetrically mounted rotors \cite{RN413,RN328,RN335,RN412}. It has experienced a great boom in recent years due to potential applications such as aerial photography, geological exploration and disaster relief. To enhance the reliability and safety of these applications, researchers have developed many cooperative strategies of quad-rotor systems, one of which is the leader-follower formation scheme that allows for steering multiple quad-rotors to form a geometric pattern while tracking a leader/reference. For this sake, three classical approaches can be applied, that is, linearization  \cite{RN849,RN877,RN851,RN857,RN876,RN885}, inner-outer loop method \cite{RN863,RN858,RN874,RN855,RN848,RN843,RN856} and feedback linearization approach \cite{RN435,RN417,RN412,RN888}.\\
\indent A direct approach for quad-rotor formation consists in linearizing quad-rotor model around maneuvering point. Due to easy implementation of linearized quad-rotor model, the control scheme reported in \cite{RN849} elaborates the potential functions with its formation control law, achieving the formation pattern with collision avoidance behavior. The control design reported in  \cite{RN877} realizes cooperative formation of multiple heterogeneous agents, including many quad-rotors and linearized differentiable mobile robots moving on the ground. Given the possible interaction fault between adjacent quad-rotors during their formation tracking, two $H_\infty$-formation schemes are proposed in \cite{RN851} and \cite{RN876} respectively, presenting fault-tolerant capacity during flights of multiple quad-rotors. In \cite{RN857}, the result developed for high-order linear integrators is adapted to solve the formation problem of multiple quad-rotors with external disturbances. A finite-time formation tracking controller can be found in \cite{RN885}. Some other literature, such as \cite{RN845,RN875,RN466,RN847}, extend the results developed for linear double integrators to achieve formation tracking of multiple quad-rotors directly. These formation tracking control laws, developed by either linearized quad-rotor model or linear double integrators, however, can only solve the formation rendezvous problem or formation tracking problem with a slowly-moving leader. They are incapable of steering quad-rotors to perform agile motions with large roll/pitch angles due to the loss of model nonlinearities in their designs.\\
\indent As for the inner-outer loop approach, the longitudinal and latitudinal position errors therein are viewed as an outer loop, and the attitude errors in roll and pitch are called inner loop \cite{RN852}. This approach becomes popular out of two facts. First, the altitude and yaw can be steered independently. Second, the roll and pitch angles can be viewed as virtual control inputs for the dynamics of longitudinal and latitudinal position. Some associated results can be found in \cite{RN863,RN858,RN874,RN855,RN848,RN843,RN856}. In \cite{RN863}, a discontinuous observer for the formation trajectory is proposed based on neighboring connections and graph theory, which, together with the inner attitude algorithm, makes the formation errors converge to zero asymptotically. A discontinuous formation tracking controller reported in \cite{RN874} obtains finite-time convergence of the quad-rotor formation errors. The centroid formation, steering the average position of all quad-rotors to track the leader's position, can be achieved by the control laws proposed in both \cite{RN863} and \cite{RN874}. In consideration of inefficiency of GPS during indoor flying, the works \cite{RN858,RN855} propose two vision-based formation control laws. The invertibility of the Laplacian matrix associated with an undirected connected interaction graph is made full use by \cite{RN848}, in which the quad-rotor formation error is proven to be convergent, and this convergent rate is proportional to the smallest eigenvalue of the interaction graph. The non-smooth consensus formation tracking scheme shown in \cite{RN843} achieves the formation pattern with a constant speed. For agile coordination, a virtual structure approach utilized in \cite{RN856} views each quad-rotor in the group as a rigid body, achieving swarm with over 200 quad-rotors. However, the inner-outer loop formation schemes of multiple quad-rotors generally lead to difficulties in obtaining the desired attitude derivatives for the inner-loop control. This is because the desired attitude (roll and pitch angles) includes the coordination position and velocities, and the direct calculation of desired attitude derivatives will contain unavailable information of unconnected quad-rotor. To obtain the desired attitude derivatives, the velocity observer approach \cite{RN863}, first derivative method \cite{RN874} or direct differentiation \cite{RN858} can be applied. As a result, the overall stability analysis becomes incomplete.\\
\indent The feedback linearization approach involves coming up with system state transformations into an equivalent linear system through variable changes, state extensions and suitable control input. Although it is proved that the normal twelve-dimensional system of a quad-rotor is not feedback linearizable in \cite{RN435}, an extended system with fourteen system states is feedback linearizable with triple fourth-order position states and a second-order yaw state. This kind of dynamic extension is built upon the flat property of the quad-rotor and viewing the second-order state of thrust force as the control input to be designed. Classical results associated with this approach can be found in \cite{RN417,RN412}. The main inefficiency of feedback linearization based on dynamic extension lies in that the greater-than-zero total thrust control cannot be always ensured, which might obstruct the flight because the quad-rotor needs an upward thrust to hover in the air. To solve the problem for the dynamic-extension-based feedback linearization method associated with quad-rotor formation, a recent literature \cite{RN888} proposes a control design based on Euler-Poincar$\mathrm{\acute{e}}$ equations, by whose result the obtained thrust force control is kept being greater than zero all the time.\\
\indent Motivated by the facts and challenges stated above, this paper makes further endeavors to consider the leader-follower formation issue for a team of quad-rotors. To deal with this formation problem, we first establish a local tracking control law via non-regular feedback linearization method, given any reference signal with bounded derivatives and a reference altitude acceleration no more than gravitational acceleration. Then, a distributed observer is investigated by the reduced-order and linear time-varying techniques, solving the problem that only neighboring connection is available and only partial followers can know the leader’s states. The combination of the local tracking control law and the distributed reduced-order observer leads to the formation tracking scheme. Compared with previous research, the main innovation points of proposed formation scheme are as follows:
\begin{itemize}
  \item the safe maneuvering can be ensured as the total thrust of each quad-rotor is kept being greater than zero all the time, and simultaneously, the roll and pitch are strictly limited in $(-\pi/2,\pi/2)$;
  \item the local tracking controller derived from non-regular feedback linearization method allows quad-rotors for admissible agile motions with large roll and pitch angles;
  \item the fully distributed coordination without global interaction will not cause large communication burden when adding cooperative quad-rotors to achieve complex formation pattern.
\end{itemize}
\indent The rest is organized as follows. Section 2 contains quad-rotor modeling, basic graph theory and problem formulation. Section 3 presents the main results. Section 4 considers the numerical simulation. Section 5 concludes the work briefly.\\
\textbf{Notations:}~The norm '$\|\cdot\|$ refers to Euclidean norm, the letter '$e$' without subscript/superscript denotes exponent, '${\mathrm{diag}}(\cdot)$' means diagonalization and $I_n$ denotes a $n$-dimensional identity matrix.
\section{Preliminaries and Problem Formulation}
\subsection{Model Description}
\begin{figure}[H]
\centering     %%% not \center
\includegraphics[width=.65\columnwidth]{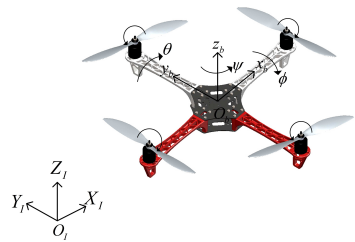}
\caption{The illustration of a quad-rotor.}
\end{figure}
The quad-rotor model used for control formulation and validation displays a symmetrical configuration, see an example in Figure 1, where $F^I(O_Ix_Iy_Iz_I)$ denotes the inertial frame and $F^B(x_by_bz_b)$ denotes the body-fixed frame. Suppose that there are $n$ quad-rotors with index belonging to $\mathcal{N}=\{1,2,...,n\}$. For $i\in \mathcal{N}$, let $p_i=[x_i,y_i,z_i]^T$ be the position of the $i$-th quad-rotor in inertial frame, and $\eta_i=[\phi_i,\theta_i,\psi_i]^T$ denote the roll angle $\phi_i$, pitch angle $\theta_i$ and yaw angle $\psi_i$, respectively. The common assumption below can be used to simplify the quad-rotor modeling.
\begin{assumption}\label{A1ds}
The quad-rotor body is rigid and with an invariant structure and aerodynamic parameters, and the thrust and drag are proportional to the square of the propeller speed.
\end{assumption}
Based on Assumption $\ref{A1ds}$, the quad-rotor dynamics can be described by \cite{RN413,RN885}
\begin{equation}\label{DynaQd}
\left\{ \begin{split}
&\ddot p_i = R_{i}^I{T_{iB}}/m_i + {G}\\
&M_i\left( \eta_i  \right)\ddot \eta_i  + C_i\left( {\eta_i ,\dot{\eta}_i } \right)\dot{\eta}_i = {\tau_{ic}}
\end{split} \right.,
\end{equation}
where $T_{iB}=[0,0,F_i]^T$ is the thrust with respect to the body-fixed frame with $F_i$ the total lift, $m_i$ is the mass, $G=[0,0,-g]^T$ denotes the gravity vector with respect to the inertial frame and $\tau_{ic}=[\tau_{i,\phi},\tau_{i,\theta},\tau_{i,\psi}]^T$ is the control torque. Moreover, the specific definitions of $M_i(\eta_i)$ and $C_i(\eta_i,\dot{\eta}_i)$ can be found in \cite{RN413,RN885}.\\
\indent To present the main idea concisely, define virtual control inputs by $u_{i,1}=F_i/m_i$ and $\tau_{i}=[u_{i,2},u_{i,3},u_{i,4}]^T$, where $\tau_i=M^{-1}_i(\eta_i)(\tau_{ic}-C_i(\eta_i,\dot{\eta}_i)\dot{\eta}_i)$. Then, rewrite the quad-rotor dynamics as follows:
\begin{equation}\label{SimpModel}
\left\{ \begin{split}
{{\ddot x}_i} &= {u_{i,1}}\left( {\cos {\psi _i}\sin{\theta _i}\cos{\phi _i} + \sin{\psi _i}\sin{\phi _i}} \right)\\
{{\ddot y}_i} &= {u_{i,1}}\left( {\sin {\psi _i}\sin {\theta _i}\cos {\phi _i} - \cos {\psi _i}\sin {\phi _i}} \right)\\
{{\ddot z}_i} &= {u_{i,1}}\cos {\theta _i}\cos {\phi _i} - g\\
{{\ddot \phi }_i} &= {u_{i,2}}\\
{{\ddot \theta }_i} &= {u_{i,3}}\\
{{\ddot \psi }_i} &= {u_{i,4}}
\end{split} \right..
\end{equation}
%As indicated in $(\ref{SimpModel})$, the quad-rotor in question is an underactuated system with only four control inputs but six degrees of freedom. Therefore, the formation control design for this type of object is a difficult issue.
Suppose that the leader agent with index $0$ is time-parameterized and defined by
\begin{equation}\label{reftrj}
p_0(t)=[x_0(t),y_0(t),z_{0}(t)]^T,
\end{equation}
where $(x_0,y_0)$ denotes the coordinate of latitude and longitude and $z_0$ is the altitude. Steering multiple quad-rotors to form a pattern while tracking a leader is related to many potential applications such as cooperative patrolling and geometrical prospecting, it is therefore very important to maintain quad-rotors moving with fixed altitude in regards of safety and airspace limitations, which motivates the assumption below.
\begin{assumption}\label{sssref}
The latitudinal and longitudinal positions of the leader, $x_0$ and $y_0$, are fourth-order differentiable with bounded derivatives; and the altitude $z_0$ is a constant.
\end{assumption}
\subsection{Graph Theory}
A graph $\mathcal{G}=\{\mathcal{N},\mathcal{E},\mathcal{A}\}$ is used to describe the interaction among multiple quad-rotors, where $\mathcal{N}=\{1,2,...,n\}$ denotes the node set, ${\mathcal{E}} \subseteq \mathcal{N} \times \mathcal{N}$ is the edge set and ${\mathcal{A}}$ is adjacent matrix \cite{RN272}. Each node $i\in \mathcal{N}$ represents one quad-rotor, and an edge $\left\{ {\left( {i,j} \right):i \ne j} \right\} \in \mathcal{E}$ denotes that the quad-rotor $j$ can send information to quad-rotor $i$ via wireless module. The adjacent matrix is defined by $\mathcal{A }= \left\{ {{a_{ij}}} \right\} \in {\mathbb{R}^{n \times n}}$, where ${a_{ij}} = 1$ if $\left( {i,j} \right) \in{ {\mathcal{E}}}$, otherwise ${a_{ij}} = 0$. Self connection is forbidden by setting ${a_{ii}} = 0,\forall i \in \mathcal{N}$. For an undirected graph, ${a_{ij}} = 1 \Leftrightarrow {a_{ji}} = 1$ holds, denoting that the quad-rotor $i$ and quad-rotor $j$ can transmit information to each other. A path of graph ${\mathcal{G}}$ is an edge sequence $\{(i,j_1),(j_2,j_3),...,(j_*,j)\}$. The in-degree matrix of graph ${\mathcal{G}}$ is given by
$
\mathcal{{D}}{\rm{ = \mathrm{diag}}}\left\{ [l_{11},l_{22},...,l_{nn}] \right\},{l_{ii}} = \sum\limits_{j = 1}^n {{a_{ij}},\forall i,j \in \mathcal{N}},
$
and the Laplacian matrix can then be obtained as
\begin{equation}\label{lapm}
{\mathcal{L}}={\mathcal{D}}-{\mathcal{A}}.
\end{equation}
As reported in \cite{RN531}, the matrix $\mathcal{L}$ is semi-positive definite and has only one zero eigenvalue and $n-1$ positive eigenvalues provided that $\mathcal{G}$ is undirected and connected. Define $a_{i0}=1$ if there is a valid information flow from the leader to the $i-$th quad-rotor, otherwise $a_{i,0}=0$, which then leads to the matrix given by
\begin{equation}\label{hHH}
\begin{split}
\mathcal{H}&=\mathcal{L}+\mathcal{B},
\end{split}
\end{equation}
where $\mathcal{B}=\text{diag}\{[a_{10},...,a_{n0}]\}$. It has been shown in \cite{RN531} that $\mathcal{H}$ is positive definite if $\mathcal{G}$ is connected and the matrix $\mathcal{B}$ is non-trivial. Moreover, $\mathcal{H}$ is symmetric if $\mathcal{G}$ is undirected. For a basic formation setup, the assumption below is needed.
\begin{assumption}
The graph $\mathcal{G}$ is undirected and connected, and $\mathcal{B} \neq \mathrm{0}_{n\times n}$.
\end{assumption}
\begin{remark}
$\mathcal{B}\neq 0$ means that there at least one quad-rotor can know the leader's position and derivatives up to appropriate orders.
\end{remark}
\subsection{Problem Formulation}
In this note, the focus is set on achieving a fixed formation pattern. Define a constant vector by
\begin{equation}\label{relp}
\Delta_i=[d_{i,x},d_{i,y},d_{i,z}]^T,
\end{equation}
and formation error by
\begin{equation}\label{zetaeee}
{\eta _i} = \left[ \begin{array}{l}
{x_i} - {x_0} - {d _{i,x}}\\
{y_i} - {y_0} - {d _{i,y}}\\
{z_i} - {z_0} - {d _{i,z}}
\end{array} \right].
\end{equation}
The control objective can then be stated as:\textit{ Based on the quad-rotor model $(\ref{SimpModel})$ and Assumptions 1-3, find control laws $(u_{i,1},u_{i,2},u_{i,3},u_{i,4})$ so that} $\mathop {\lim }\limits_{t \to \infty } {\eta _i} = 0, \forall i \in \mathcal{N}$.
\section{The main results}
The formation scheme includes a local controller and a distributed observer. Given any smooth reference trajectory with bounded derivatives and a less-than-$g$ altitude acceleration, the local controller is proposed firstly with the help of non-regular feedback linearization technique, steering the tracking errors converge to zero asymptotically. The distributed observer is then investigated via interaction between connected agents and is viewed as virtual reference trajectory. The leader-follower formation can be realized via applying the local control law on each follower quad-rotor to track its reference signal, see Figure 2 for an illustration.
\begin{figure}[H]
\centering     %%% not \center
\includegraphics[width=.58\columnwidth]{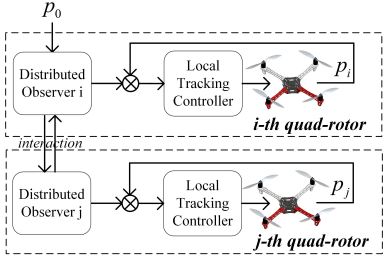}
\caption{The illustration of formation control scheme.}
\end{figure}
\subsection{Local Tracking Control Design}
A lemma is needed to formulate the control design.
\begin{lemma}\cite{RN842}
The system
\begin{equation}\label{l2l2}
\ddot{\xi}=-a_1\tanh(\dot{\xi}+a_2\xi)-a_3\tanh(\dot{\xi})
\end{equation}
is globally asymptotically stable with respect to $\xi\in \mathbb{R}^m$ and $\dot{\xi}\in\mathbb{R}^m$, provided that $a_1,a_2,a_3>0$.
\end{lemma}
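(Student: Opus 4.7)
The plan is to prove global asymptotic stability via a Lyapunov-plus-LaSalle argument. First I would introduce the candidate Lyapunov function
\[
V(\xi,\dot\xi) \;=\; \tfrac{1}{2}\|\dot\xi\|^{2} \;+\; \frac{a_{1}}{a_{2}}\sum_{j=1}^{m}\ln\cosh(a_{2}\xi_{j}),
\]
which is smooth, positive definite, and radially unbounded: the first term handles $\dot\xi$, while $\ln\cosh(s)=\int_{0}^{s}\tanh(\sigma)\,d\sigma$ is non-negative, vanishes only at $s=0$, and grows like $|s|$ as $|s|\to\infty$. Verifying these properties is the routine first step.

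Next I would differentiate $V$ along trajectories. Using $\tfrac{d}{dt}\ln\cosh(a_{2}\xi_{j})=a_{2}\tanh(a_{2}\xi_{j})\dot\xi_{j}$ and substituting the given expression for $\ddot\xi$, I expect to obtain
\[
\dot V \;=\; -a_{1}\dot\xi^{T}\bigl[\tanh(\dot\xi+a_{2}\xi)-\tanh(a_{2}\xi)\bigr] \;-\; a_{3}\dot\xi^{T}\tanh(\dot\xi).
\]
The second term is manifestly non-positive since $s\tanh(s)\ge 0$. For the first term I would argue component-wise via the mean-value theorem: $\tanh(\dot\xi_{j}+a_{2}\xi_{j})-\tanh(a_{2}\xi_{j})=\mathrm{sech}^{2}(c_{j})\,\dot\xi_{j}$ for some intermediate $c_{j}$, so the contribution is $-a_{1}\sum_{j}\mathrm{sech}^{2}(c_{j})\dot\xi_{j}^{2}\le 0$. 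Hence $\dot V\le 0$, with equality if and only if $\dot\xi=0$.

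Finally I would invoke LaSalle's invariance principle. On the set $\{\dot V=0\}=\{\dot\xi=0\}$ the system reduces to $\ddot\xi=-a_{1}\tanh(a_{2}\xi)$; invariance forces $\ddot\xi\equiv 0$, whence $\tanh(a_{2}\xi)\equiv 0$, and thus $\xi=0$. Therefore the largest invariant subset is the origin, yielding global asymptotic stability of $(\xi,\dot\xi)=(0,0)$.

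The main obstacle I anticipate is the sign analysis of the cross term $-a_{1}\dot\xi^{T}[\tanh(\dot\xi+a_{2}\xi)-\tanh(a_{2}\xi)]$. A naive Lyapunov choice leaves this piece sign-indefinite; the subtraction of $\tanh(a_{2}\xi)$ induced by the $\tfrac{a_{1}}{a_{2}}\ln\cosh(a_{2}\xi_{j})$ contribution, combined with monotonicity of $\tanh$, is precisely what converts it into a damping term. Without that cancellation, the $a_{2}\xi$ coupling inside the first nonlinearity would require a more elaborate nested-saturation construction in the spirit of Teel-type arguments.
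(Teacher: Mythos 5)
Your proof is correct. The candidate $V(\xi,\dot\xi)=\tfrac12\|\dot\xi\|^2+\tfrac{a_1}{a_2}\sum_j\ln\cosh(a_2\xi_j)$ is positive definite and radially unbounded, the derivative computation is right, the mean-value-theorem step legitimately turns the cross term into $-a_1\sum_j\mathrm{sech}^2(c_j)\dot\xi_j^2\le 0$, and since the system is autonomous, LaSalle's invariance principle (with radial unboundedness giving compact sublevel sets) correctly upgrades $\dot V\le 0$, $\{\dot V=0\}=\{\dot\xi=0\}$, to global asymptotic stability of the origin. The route is, however, genuinely different from the one the paper relies on: Lemma 1 is only cited (to Ailon), but the same computation is carried out explicitly inside the proof of Lemma 3, where the Lyapunov function is $V=a_1\ln\cosh(\dot\xi+a_2\xi)+a_3\ln\cosh(\dot\xi)+\tfrac12 a_2\dot\xi^{\,2}$ (scalar case, with $h_i$ in place of $a_i$). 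By building the shifted argument $\dot\xi+a_2\xi$ into the potential term, that choice yields $\dot V=-\left[a_1\tanh(\dot\xi+a_2\xi)+a_3\tanh(\dot\xi)\right]^2-a_2a_3\,\dot\xi\tanh(\dot\xi)$, which is negative \emph{definite} in $(\xi,\dot\xi)$, so no invariance argument is needed at all. The trade-off is essentially this: your kinetic-plus-potential construction is the more natural one and the MVT damping argument is transparent, but it produces only a weak Lyapunov function and therefore leans on LaSalle, which is tied to autonomy of the dynamics; the paper's strict Lyapunov function is the more robust tool when the nominal system is subsequently perturbed, as happens in Lemma 3 where it sits in cascade with an exponentially decaying disturbance. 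For the statement as posed, both arguments are complete and valid.
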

Let $p_{id}=[x_{id},y_{id},z_{id}]^T$ be the reference trajectory of the $i$-th quad-rotor and suppose that $p_{id}$ is fourth-order differentiable with bounded derivatives and $|\ddot{z}_{id}(t)|<g,\forall t \geq 0$. The assumption $\ddot{z}_{id}(t)<g$ is made here due to concern that the altitude coordinations between quad-rotors will be addressed and $z_{id}$ can be time-varying. Define errors
\begin{equation}\label{e1e2e3}
\left\{
\begin{split}
e_{i,x}&=x_i-x_{id}-d_{i,x}\\
e_{i,y}&=y_i-y_{id}-d_{i,y}\\
e_{i,z}&=z_i-z_{id}-d_{i,z}
\end{split}\right.,
\end{equation}
and calculate the second-order derivative of $(\ref{e1e2e3})$ as follows,
\begin{equation}\label{de123}
\left\{ \begin{split}
{{\ddot e}_{i,x}} &= {u_{i,1}}\left( {\cos {\psi _i}\sin{\theta _i}\cos{\phi _i} + \sin{\psi _i}\sin{\phi _i}} \right) - {{\ddot x}_{id}}\\
{{\ddot e}_{i,y}} &= {u_{i,1}}\left( {\sin {\psi _i}\sin {\theta _i}\cos {\phi _i} - \cos {\psi _i}\sin {\phi _i}} \right) - {{\ddot y}_{id}}\\
{{\ddot e}_{i,z}} &= {u_{i,1}}\cos {\theta _i}\cos {\phi _i} - g - {{\ddot z}_{id}}
\end{split} \right..
\end{equation}
In view of $(\ref{de123})$, the non-regular feedback linearization method can be applied. The specific steps are:
\begin{enumerate}
  \item Determine $u_{i,1}$ and $u_{i,4}$;
  \item Basing on the designed $u_{i,1}$ and the closed-loop $e_{i,z}$-dynamics, compute and obtain the feedback linearizable form of $[e_{i,x}^{(4)},e_{i,y}^{(4)}]$, in which the control inputs $(u_{i,2},u_{i,3})$ will be obviously contained;
  \item Find a suitable feedback linearized control law for $(e_{i,x},e_{i,y})$-dynamics;
\end{enumerate}
It is admissible to steer the yaw angle $\psi_i$ to zero for formation purpose \cite{RN874,RN848} and determine $u_{i,4}$ in a PD-form by
\begin{equation}\label{yawctrl}
{u_{i,4}} =  - {k_{1,\psi }}{\psi _i} - {k_{2,\psi }}{{\dot \psi }_i},
\end{equation}
where $k_{1,\psi},k_{2,\psi}>0$. Suppose that $\theta_i,\phi_i\in(-\pi/2,\pi/2)$ and design $u_{i,1}$ by
\begin{equation}\label{ui14}
 \begin{split}
{u_{i,1}} &= \frac{1}{{\cos {\theta _i}\cos {\phi _i}}}\bar{u}_{i,1},\\
\bar{u}_{i,1}&:= g + {{\ddot z}_{id}} - {k_{1,z}}\tanh \left( {{{\dot e}_{i,z}} + {k_{2,z}}{e_{i,z}}} \right)- {k_{3,z}}\tanh {{\dot e}_{i,z}},
%{u_{i,4}} &=  - {k_{1,\psi}}{\psi _i} - {k_{2,\psi}}{{\dot \psi }_i}
\end{split}
\end{equation}
where $k_{z,1},k_{z,2},k_{z,3}>0$ and $k_{z,1}+k_{z,3}<g-|\ddot{z}_{id}|$. The term $\bar{u}_{i,1}$ is an intermediate variable for calculation convenience. Via substituting $(\ref{ui14})$ into $(\ref{de123})$, $\ddot{e}_{i,z}$ becomes
\begin{equation}\label{dezez}
\begin{split}
\ddot{e}_{i,z}&=-k_{1,z}\tanh(\dot{e}_{i,z}+k_{2,z}e_{i,z})-k_{3,z}\tanh(\dot{e}_{i,z}),\\
%\ddot{\psi}_i&=- {k_{1,\psi}}{\psi _i} - {k_{2,\psi}}{{\dot \psi }_i}
\end{split}
\end{equation}
which shows that, according to Lemma 1, $e_{i,z}$ is asymptotically stable. Additionally, it can be seen that $\bar{u}_{i,1}>0,\forall t \geq 0$, which helps to formulate the control inputs $[u_{i,2},u_{i,3}]^T$. By $(\ref{ui14})$, organize the column $[\ddot{e}_{i,x},\ddot{e}_{i,y}]^T$ by
\begin{equation}\label{ddexy}
\left[ \begin{array}{l}
{{\ddot e}_{i,x}}\\
{{\ddot e}_{i,y}}
\end{array} \right] = {{\bar u}_{i,1}}{R_2}\left( {{\psi _i}} \right)S\left[ \begin{array}{l}
\tan {\theta _i}\\
\displaystyle\frac{{\tan {\phi _i}}}{{\cos {\theta _i}}}
\end{array} \right] - \left[ \begin{array}{l}
{{\ddot x}_{id}}\\
{{\ddot y}_{id}}
\end{array} \right],
\end{equation}
where
\begin{equation}\label{barui1}
\begin{split}
{R_2}\left( {{\psi _i}} \right) &= \left[ {\begin{array}{*{20}{c}}
{\cos {\psi _i}}&{ - \sin {\psi _i}}\\
{\sin {\psi _i}}&{\cos {\psi _i}}
\end{array}} \right],S = \left[ {\begin{array}{*{20}{c}}
1&0\\
0&{ - 1}
\end{array}} \right].%,\\
%{{\bar u}_{i,1}} &= g + {{\ddot z}_{id}} - {k_{1,z}}\tanh \left( {{{\dot e}_{i,z}} + {k_{2,z}}{e_{i,z}}} \right) - {k_{3,z}}\tanh {{\dot e}_{i,z}}.
\end{split}
\end{equation}
Differentiating $(\ref{ddexy})$ results% $[e_{i,x}^{(3)},e_{i,y}^{(3)}]^T$ by
\begin{small}
\begin{equation}\label{thirdorder}
\begin{split}
\left[ \begin{array}{l}
e_{i,x}^{\left( 3 \right)}\\
e_{i,y}^{\left( 3 \right)}
\end{array} \right] &= \underbrace {\left[{{\dot {\bar u}}_{i,1}}{R_2}\left( {{\psi _i}} \right) + {{\bar u}_{i,1}}\frac{{\mathrm{d}{R_2}\left( {{\psi _i}} \right)}}{{\mathrm{d}t}}\right]S\left[ \begin{array}{l}
\tan {\theta _i}\\
\displaystyle\frac{{\tan {\phi _i}}}{{\cos {\theta _i}}}
\end{array} \right]}_{{\Xi _{i,1}}}+ {{\bar u}_{i,1}}{R_2}\left( {{\psi _i}} \right)SM_i\left[ \begin{array}{l}
{{\dot \phi }_i}\\
{{\dot \theta }_i}
\end{array} \right] - \left[ \begin{array}{l}
x_{id}^{\left( 3 \right)}\\
y_{id}^{\left( 3 \right)}
\end{array} \right]\\
&=\Xi_{i,1}+ {{\bar u}_{i,1}}{R_2}\left( {{\psi _i}} \right)SM_i\left[ \begin{array}{l}
{{\dot \phi }_i}\\
{{\dot \theta }_i}
\end{array} \right] - \left[ \begin{array}{l}
x_{id}^{\left( 3 \right)}\\
y_{id}^{\left( 3 \right)}
\end{array} \right],
\end{split}
\end{equation}
\end{small}
where
\begin{equation}\label{MMm}
\begin{split}
{{\dot {\bar u}}_{i,1}} &= z_{id}^{\left( 3 \right)} - {k_{3,z}}\left( {1 - {{\tanh }^2}{{\dot e}_{i,z}}} \right){{\ddot e}_{i,z}}\\
&~- {k_{1,z}}\left[ {1 - {{\tanh }^2}\left( {{{\dot e}_{i,z}} + {k_{2,z}}{e_{i,z}}} \right)} \right]\left( {{{\ddot e}_{i,z}} + {k_{2,z}}{{\dot e}_{i,z}}} \right),\\
\frac{{\mathrm{d}{R_2}\left( {{\psi _i}} \right)}}{{\mathrm{d}t}} &= {{\dot \psi }_i}{R_2}\left( {\frac{\pi }{2}} \right){R_2}\left( {{\psi _i}} \right),\\
M_i &= \left[ {\begin{array}{*{20}{c}}
0&{\sec^2\theta_{i}}\\
{\sec^2\phi_i \sec\theta_i}&{\tan\phi_i\tan\theta_i\sec\theta_i}
\end{array}} \right].
\end{split}
\end{equation}
Via some direct computations, $[e_{i,x}^{(4)},e_{i,y}^{(4)}]^T$ can be given by
\begin{equation}\label{sdsds}
\begin{split}
\left[ \begin{array}{l}
e_{i,x}^{\left( 4 \right)}\\
e_{i,y}^{\left( 4 \right)}
\end{array} \right]& =\Xi_{i,2}+ {{\bar u}_{i,1}}{R_2}\left( {{\psi _i}} \right)SM_i\left[ \begin{array}{l}
{u_{i,2}}\\
{u_{i,3}}
\end{array} \right] - \left[ \begin{array}{l}
x_{id}^{\left( 4 \right)}\\
y_{id}^{\left( 4 \right)}
\end{array} \right],
\end{split}
\end{equation}
with derivations shown in \eqref{sdsdaas} next page.
\begin{figure*}[!t]
\begin{equation}\label{sdsdaas}
\begin{split}
\left[ \begin{array}{l}
e_{i,x}^{\left( 4 \right)}\\
e_{i,y}^{\left( 4 \right)}
\end{array} \right]& = \underbrace{{{\dot \Xi }_{i,1}} +\left[ {{\dot {\bar u}}_{i,1}}{R_2}\left( {{\psi _i}} \right)SM_i + {{\bar u}_{i,1}}\frac{{{\rm{d}}{R_2}\left( {{\psi _i}} \right)}}{{{\rm{d}}t}}SM_i+ {{\bar u}_{i,1}}{R_2}\left( {{\psi _i}} \right)S\frac{{{\rm{d}}M_i}}{{{\rm{d}}t}}\right]\left[ \begin{array}{l}
{{\dot \phi }_i}\\
{{\dot \theta }_i}
\end{array} \right]}_{\Xi_{i,2}}+ {{\bar u}_{i,1}}{R_2}\left( {{\psi _i}} \right)SM_i\left[ \begin{array}{l}
{u_{i,2}}\\
{u_{i,3}}
\end{array} \right] - \left[ \begin{array}{l}
x_{id}^{\left( 4 \right)}\\
y_{id}^{\left( 4 \right)}
\end{array} \right],\\
{{\dot \Xi }_{i,1}} &= \left[ {{{\ddot {\bar u}}_{i,1}}{R_2}\left( {{\psi _i}} \right) + 2{{\dot {\bar u}}_{i,1}}\frac{{\mathrm{d}{R_2}\left( {{\psi _i}} \right)}}{{\mathrm{d}t}} + {{\bar u}_{i,1}}\frac{{{\mathrm{d}^2}{R_2}\left( {{\psi _i}} \right)}}{{\mathrm{d}{t^2}}}} \right]S\left[ \begin{array}{l}
\tan {\theta _i}\\
\displaystyle\frac{{\tan {\phi _i}}}{{\cos {\theta _i}}}
\end{array} \right] + \left( {{{\dot {\bar u}}_{i,1}}{R_2}\left( {{\psi _i}} \right) + {{\bar u}_{i,1}}\frac{{\mathrm{d}{R_2}\left( {{\psi _i}} \right)}}{{\mathrm{d}t}}} \right)SM_i\left[ \begin{array}{l}
{{\dot \phi }_i}\\
{{\dot \theta }_i}
\end{array} \right],\\
{{\ddot {\bar u}}_{i,1}} &= z_{id}^{\left( 4 \right)} + 2{k_{1,z}}\left[ {1 - {{\tanh }^2}\left( {{{\dot e}_{i,z}} + {k_{2,z}}{e_{i,z}}} \right)} \right]{\left( {{{\ddot e}_{i,z}} + {k_{2,z}}{{\dot e}_{i,z}}} \right)^2}\tanh \left( {{{\dot e}_{i,z}} + {k_{2,z}}{e_{i,z}}} \right) \\
&~- {k_{1,z}}\left[ {1 - {{\tanh }^2}\left( {{{\dot e}_{i,z}} + {k_{2,z}}{e_{i,z}}} \right)} \right]\left( {e_{i,z}^{\left( 3 \right)} + {k_{2,z}}{{\ddot e}_{i,z}}} \right)
+ 2{k_{3,z}}\left( {1 - {{\tanh }^2}{{\dot e}_{i,z}}} \right)\ddot e_{i,z}^2\tanh {{\dot e}_{i,z}} - {k_{3,z}}\left( {1 - {{\tanh }^2}{{\dot e}_{i,z}}} \right)e_{i,z}^{\left( 3 \right)},\\
e_{i,z}^{\left( 3 \right)} &=  - {k_{1,z}}\left[ {1 - {{\tanh }^2}\left( {{{\dot e}_{i,z}} + {k_{2,z}}{e_{i,z}}} \right)} \right]\left( {{{\ddot e}_{i,z}} + {k_{2,z}}{{\dot e}_{i,z}}} \right) - {k_{3,z}}\left( {1 - {{\tanh }^2}{{\dot e}_{i,z}}} \right){\ddot e_{i,z}},\\
\frac{{{\rm{d}}R_2^2\left( {{\psi _i}} \right)}}{{{\rm{d}}{t^2}}} &= {{\ddot \psi }_i}{R_2}\left( {\frac{\pi }{2}} \right){R_2}\left( {{\psi _i}} \right) + \dot \psi _i^2R_2^2\left( {\frac{\pi }{2}} \right){R_2}\left( {{\psi _i}} \right),\\
\frac{{\mathrm{d}M_i}}{{\mathrm{d}t}} &= \left[ {\begin{array}{*{20}{c}}
0&{2{{\dot \theta }_i}{{\sec }^2}{\theta _i}\tan {\theta _i}}\\
{2{{\dot \phi }_i}{{\sec }^2}{\phi _i}\tan {\phi _i}\sec {\phi _i} + {{\dot \theta }_i}{{\sec }^2}{\phi _i}\tan {\theta _i}\sec {\theta _i}}&{{{\dot \phi }_i}{{\sec }^2}{\phi _i}\tan {\theta _i}\sec {\theta _i} + {{\dot \theta }_i}\tan {\phi _i}\left( {{{\sec }^3}{\theta _i} + \tan {\theta _i}\sec {\theta _i}} \right)}
\end{array}} \right].
\end{split}
\end{equation}
\hrulefill
\end{figure*}
\\
Then, the error states $(\ref{sdsds})$ motivates
\begin{equation}\label{u23i}
\left[ \begin{array}{l}
{u_{i,2}}\\
{u_{i,3}}
\end{array} \right] = \frac{{{M_i^{ - 1}}{S^{ - 1}}R_2^{ - 1}\left( {{\psi _i}} \right)}}{{{{\bar u}_{i,1}}}} \{{\left[ \begin{array}{l}
x_{id}^{\left( 4 \right)}\\
y_{id}^{\left( 4 \right)}
\end{array} \right] - {\Xi _{i,2}} + \left[ \begin{array}{l}
{u_{i,x}}\\
{u_{i,y}}
\end{array} \right]} \},
\end{equation}
where $[u_{i,x},u_{i,y}]^T$ is viewed as new control input, $S^{-1}=S$, $R_2(\psi_i)=R_2^T(\psi_i)$ and
\begin{equation}\label{MatrixMMM}
{M_i^{ - 1}} = \left[ {\begin{array}{*{20}{c}}
{-0.25\sin 2{\phi _i}\sin 2{\theta _i}}&{{{\cos }^2}{\phi _i}\cos {\theta _i}}\\
{{{\cos }^2}{\theta _i}}&0
\end{array}} \right].
\end{equation}
Via substituting $(\ref{u23i})$ into $(\ref{sdsds})$, we obtain the linearized form of $[e_{i,x},e_{i,y}]^T$-dynamics by
\begin{equation}\label{fbli}
\left\{ \begin{array}{l}
e_{i,x}^{\left( 4 \right)} = {u_{i,x}}\\
e_{i,y}^{\left( 4 \right)} = {u_{i,y}}
\end{array} \right..
\end{equation}
Concerning the linear form of $(\ref{fbli})$, there are various strategies to formulate $[u_{i,x},u_{i,y}]^T$, such as nested linear technique and nested saturation function method \cite{RN911,RN912}. For brief, we propose a linear control law by
\begin{equation}\label{lincotrn}
\left\{ \begin{split}
{u_{i,x}} &=  - {k_{1,x}}{e_{i,x}} - {k_{2,x}}{{\dot e}_{i,x}} - {k_{3,x}}{{\ddot e}_{i,x}} - {k_{4,x}}e_{i,x}^{\left( 3 \right)}\\
{u_{i,y}} &=  - {k_{1,y}}{e_{i,y}} - {k_{2,y}}{{\dot e}_{i,y}} - {k_{3,y}}{{\ddot e}_{i,y}} - {k_{4,y}}e_{i,y}^{\left( 3 \right)}
%{u_{i,4}} &=  - {k_{1,\psi }}{\psi _i} - {k_{2,\psi }}{{\dot \psi }_i}, and $k_{1,\psi},k_{2,\psi}>0$
\end{split} \right.
\end{equation}
where the control gains $k_{1,x},k_{2,x},k_{3,x},k_{4,x},k_{1,y},k_{2,y},k_{3,y},k_{4,y}$ should be selected so that the linearized system $(\ref{fbli})$ is exponentially stable. Till now, the design of local tracking control law is completed. The following theorem validates the local control law.
\begin{theorem}
Given any fourth-order differentiable reference trajectory $p_{id}=[x_{id},y_{id},z_{id}]^T$ with bounded derivatives and $|\ddot{z}_{id}(t)|<g$, the application of control laws $(\ref{yawctrl})(\ref{ui14})(\ref{u23i})(\ref{lincotrn})$ on the quad-rotor model $(\ref{SimpModel})$ ensures that
\begin{equation}\label{thsx}
\mathop {\lim }\limits_{t \to  + \infty } \left[ \begin{array}{l}
{x_i} - {x_{id}}\\
{y_i} - {y_{id}}\\
{z_i} - {z_{id}}
\end{array} \right] = \left[ \begin{array}{l}
{d_{i,x}}\\
{d_{i,y}}\\
{d_{i,z}}
\end{array} \right],\mathop {\lim }\limits_{t \to  + \infty } {\psi _i} = 0.
\end{equation}
\end{theorem}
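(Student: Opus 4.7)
The plan is to exploit the cascade structure induced by the non-regular feedback linearization: the closed loop decomposes into a yaw channel, an altitude channel, and a horizontal $(x,y)$ channel, each of which I would stabilize in turn. First, substituting \eqref{yawctrl} into the $\psi_i$-equation of \eqref{SimpModel} yields the autonomous linear second-order system $\ddot\psi_i+k_{2,\psi}\dot\psi_i+k_{1,\psi}\psi_i=0$, whose characteristic polynomial is Hurwitz for positive gains, so $\psi_i,\dot\psi_i,\ddot\psi_i$ decay exponentially to zero. Consequently, all the time derivatives of $R_2(\psi_i)$ that appear in $\Xi_{i,1}$ and $\Xi_{i,2}$ remain uniformly bounded, and this loop contributes no obstruction to the remaining analysis.

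Next I would treat the altitude channel. The gain condition $k_{1,z}+k_{3,z}<g-|\ddot z_{id}|$ together with $|\tanh(\cdot)|<1$ forces $\bar u_{i,1}\ge g-|\ddot z_{id}|-(k_{1,z}+k_{3,z})>0$ along every solution, so the division by $\cos\theta_i\cos\phi_i$ in \eqref{ui14} is legitimate under the standing attitude bound and the total thrust stays strictly positive. Substituting $u_{i,1}$ into $\ddot z_i$ reproduces exactly the system \eqref{dezez}, which matches Lemma~1 with $\xi=e_{i,z}$, hence $e_{i,z},\dot e_{i,z}\to 0$; from the closed-form expressions for $e_{i,z}^{(3)}$ and $\ddot{\bar u}_{i,1}$ in \eqref{sdsdaas} one then obtains boundedness and asymptotic vanishing of all higher derivatives of $e_{i,z}$ and $\bar u_{i,1}$ that are needed downstream.

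Third, for the horizontal channel I would check that the square matrix $\bar u_{i,1}R_2(\psi_i)SM_i$ is invertible: $\bar u_{i,1}>0$, $R_2(\psi_i)$ is orthogonal, $S$ is diagonal and nonsingular, and $\det M_i=-\sec^3\theta_i\sec^2\phi_i\neq 0$ on $(-\pi/2,\pi/2)^2$, so the inputs $u_{i,2},u_{i,3}$ defined in \eqref{u23i} are well posed. After the cancellation carried out in \eqref{sdsds}, the $(e_{i,x},e_{i,y})$-loop collapses to the quadruple-integrator system \eqref{fbli}; with $k_{*,x},k_{*,y}$ chosen so that $s^4+k_{4,x}s^3+k_{3,x}s^2+k_{2,x}s+k_{1,x}$ and its $y$-analogue are Hurwitz, the feedback \eqref{lincotrn} drives $e_{i,x},e_{i,y}$ together with their first three derivatives exponentially to zero, and combining the three channels yields \eqref{thsx}. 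The principal obstacle is justifying that the feedback-linearizing map remains valid along the closed-loop flow, i.e.\ that $(\phi_i,\theta_i)$ never reaches $\pm\pi/2$: while $M_i^{-1}$ itself stays bounded on $(-\pi/2,\pi/2)^2$, the precompensation term $\Xi_{i,2}$ contains $\sec^2$ and $\tan$ factors through $\mathrm{d}M_i/\mathrm{d}t$, so a fully rigorous argument requires a local bootstrap showing that, for sufficiently small initial attitude deviations and moderate reference derivatives, the roll and pitch remain in a prescribed compact subinterval of $(-\pi/2,\pi/2)$ throughout the trajectory, thereby closing the loop between the standing assumption and the closed-loop behaviour.
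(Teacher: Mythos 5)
Your proof is correct and follows essentially the same route as the paper: the closed loop is decomposed into the yaw, altitude, and horizontal channels, with the yaw and horizontal loops handled by Hurwitz linear stability and the altitude loop by Lemma~1, exactly as in the paper's (much terser) argument. The attitude-singularity issue you flag at the end is real and is the one point the paper defers to its Proposition~1, which itself relies on the very conclusion of Theorem~1 to bound $\ddot e_{i,x},\ddot e_{i,y}$, so your call for a bootstrap argument identifies a genuine weakness shared by the paper rather than a gap unique to your proof.
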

\begin{proof}
By the proposed control laws $(\ref{yawctrl})(\ref{ui14})(\ref{u23i})(\ref{lincotrn})$, the $[e_{i,x},e_{i,y},e_{i,z}]$-dynamics and the $\psi_i$-dynamics can be written as follows,
\begin{equation}\label{clol}
\left\{
\begin{split}
e_{i,x}^{\left( 4 \right)} &=  - {k_{1,x}}{e_{i,x}} - {k_{2,x}}{{\dot e}_{i,x}} - {k_{3,x}}{{\ddot e}_{i,x}} - {k_{4,x}}e_{i,x}^{\left( 3 \right)}\\
e_{i,y}^{\left( 4 \right)} &=  - {k_{1,y}}{e_{i,y}} - {k_{2,y}}{{\dot e}_{i,y}} - {k_{3,y}}{{\ddot e}_{i,y}} - {k_{4,y}}e_{i,y}^{\left( 3 \right)}\\
{{\ddot e}_{i,z}} &=  - {k_{1,z}}\tanh \left( {{{\dot e}_{i,z}} + {k_{2,z}}{e_{i,z}}} \right) - {k_{3,z}}\tanh{{\dot e}_{i,z}}\\
{{\ddot \psi }_i} &=  - {k_{1,\psi }}{\psi _i} - {k_{2,\psi }}{{\dot \psi }_i}
\end{split}\right.
\end{equation}
Based on $(\ref{clol})$ and Lemma 1 as well as classical linear stability theorem, it is direct to obtain $(\ref{thsx})$. Moreover, the latitudinal and longitudinal states $e_{i,x},\dot{e}_{i,x},\ddot{e}_{i,x},e_{i,x}^{(3)},e_{i,y},\dot{e}_{i,y},\ddot{e}_{i,y}$ and $e_{i,y}^{(3)}$ globally uniformly exponentially converge to zero. The altitude errors $e_{i,z},\dot{e}_{i,z}$ and $\ddot{e}_{i,z}$ are globally uniformly asymptotically convergent.
\end{proof}
Observing equations $(\ref{ui14})-(\ref{u23i})$, one may find out that it is essential to bound $\phi_i$ and $\theta_i$ in $(-\pi/2,\pi/2)$ regarding the designs of $(\ref{ui14})$ and $(\ref{u23i})$. To demonstrate the condition under which $\phi_i,\theta_i\in$$(-\pi/2,\pi/2)$ establishes, the proposition below is introduced.
\begin{proposition}
Given any virtual reference trajectory $p_{id}=[x_{id},y_{id},z_{id}]^T$ with bounded derivatives and satisfying $|\ddot{z}_{id}(t)|<g$, applying any control law $[u_{i,x},u_{i,y}]^T$ capable of achieving $(\ddot{e}_{i,x},\ddot{e}_{i,y})\in L_\infty$, and $u_{i,1}$ defined in (11) and any $u_{i,4}\in L_\infty$ on the quad-rotor $(\ref{SimpModel})$ ensures that $\phi_i(t),\theta_i(t)\in(-\pi/2,\pi/2), \forall t \geq 0$.% if $\phi_i(0),\theta_i(0)\in(-\pi/2,\pi/2)$.
\end{proposition}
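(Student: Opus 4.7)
The plan is to exploit equation~$(\ref{ddexy})$, together with the fact that the composition $R_{2}(\psi_{i})S$ is an orthogonal transformation (i.e.\ $(R_{2}S)(R_{2}S)^{T}=I$, since $R_{2}$ is a rotation and $S^{2}=I$), so that the norms of $[\ddot e_{i,x}+\ddot x_{id},\ddot e_{i,y}+\ddot y_{id}]^{T}/\bar u_{i,1}$ and $[\tan\theta_{i},\tan\phi_{i}/\cos\theta_{i}]^{T}$ coincide. Bounding the former uniformly in time will then force $|\tan\theta_{i}|$ and $|\tan\phi_{i}|$ to remain uniformly bounded, which keeps both angles at a strictly positive distance from $\pm\pi/2$.

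Concretely, the proof proceeds in four steps. First, I would lower-bound $\bar u_{i,1}$: using $|\tanh|\le 1$ and the gain inequality $k_{1,z}+k_{3,z}<g-|\ddot z_{id}|$ stated after~$(\ref{ui14})$, one gets $\bar u_{i,1}\ge g-|\ddot z_{id}|-k_{1,z}-k_{3,z}=:\underline u>0$, and similarly an upper bound $\bar u_{i,1}\le g+|\ddot z_{id}|+k_{1,z}+k_{3,z}$. Second, since $\ddot x_{id},\ddot y_{id}$ are bounded by Assumption~2 and $\ddot e_{i,x},\ddot e_{i,y}\in L_{\infty}$ by hypothesis, the column $[\ddot x_{i},\ddot y_{i}]^{T}=[\ddot e_{i,x}+\ddot x_{id},\ddot e_{i,y}+\ddot y_{id}]^{T}$ is uniformly bounded in norm by some constant $\bar M$. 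Third, on any interval on which $\theta_{i},\phi_{i}\in(-\pi/2,\pi/2)$, equation~$(\ref{ddexy})$ combined with the orthogonality of $R_{2}(\psi_{i})S$ yields
\begin{equation*}
\bigl\|[\tan\theta_{i},\,\tan\phi_{i}/\cos\theta_{i}]^{T}\bigr\|=\bigl\|[\ddot x_{i},\ddot y_{i}]^{T}\bigr\|/\bar u_{i,1}\le \bar M/\underline u=:M.
\end{equation*}
Consequently $|\tan\theta_{i}|\le M$ and $|\tan\phi_{i}|\le M|\cos\theta_{i}|\le M$, hence $|\theta_{i}|,|\phi_{i}|\le\arctan M<\pi/2$, with a uniform margin.

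The main obstacle is that the control law $u_{i,1}$ is itself only defined when $\cos\theta_{i}\cos\phi_{i}\neq 0$, so the preceding algebraic bound only holds on the maximal interval of existence $[0,T)$ on which $\theta_{i},\phi_{i}\in(-\pi/2,\pi/2)$. The final step is therefore a forward-invariance argument: assuming the standard initial condition $\theta_{i}(0),\phi_{i}(0)\in(-\pi/2,\pi/2)$, the uniform bound $|\theta_{i}|,|\phi_{i}|\le\arctan M$ derived on $[0,T)$ precludes either angle from approaching $\pm\pi/2$; by continuity of the closed-loop vector field on this region and standard ODE extension, we must have $T=+\infty$, which establishes the claim globally. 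I would close by noting that this also yields a uniform, explicit ``safety margin'' $\pi/2-\arctan M$ on the tilt angles, which is the feature that makes the subsequent feedback-linearizing calculations in~$(\ref{u23i})$ well-posed for all $t\ge 0$.
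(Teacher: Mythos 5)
Your proof is correct and its algebraic core is the same as the paper's: invert equation $(\ref{ddexy})$ (the paper writes this explicitly as equation $(\ref{sszz})$) and use the strict positivity of $\bar u_{i,1}$, guaranteed by $k_{1,z}+k_{3,z}<g-|\ddot z_{id}|$, together with boundedness of $\ddot e_{i,x},\ddot e_{i,y},\ddot x_{id},\ddot y_{id}$ to conclude $\tan\theta_i,\tan\phi_i\in L_\infty$ and hence $\theta_i,\phi_i\in(-\pi/2,\pi/2)$. Where you go beyond the paper is in the last step: the paper stops at ``$\tan\theta_i,\tan\phi_i\in L_\infty$, hence the angles lie in $(-\pi/2,\pi/2)$,'' which glosses over the fact that $u_{i,1}$ (and therefore the bound itself) is only defined while $\cos\theta_i\cos\phi_i\neq 0$; your forward-invariance/continuation argument on the maximal interval $[0,T)$, using the uniform margin $\pi/2-\arctan M$ to rule out escape and conclude $T=+\infty$, closes that loop explicitly and also supplies the quantitative tilt bound. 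One minor point in your favor: you draw $\ddot e_{i,x},\ddot e_{i,y}\in L_\infty$ directly from the proposition's hypothesis, whereas the paper invokes Theorem 1, which is slightly circular since Theorem 1's control law presupposes the angles stay in $(-\pi/2,\pi/2)$. Your version is the more defensible one.
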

\begin{proof}
By $(\ref{ddexy})$, one has
\begin{equation}\label{sszz}
\left[ \begin{array}{l}
\tan {\theta _i}\\
\displaystyle\frac{{\tan {\phi _i}}}{{\cos {\theta _i}}}
\end{array} \right] = \frac{{R_2^{ - 1}\left( {{\psi _i}} \right){S^{ - 1}}}}{{{{\bar u}_{i,1}}}}\left[ {\left[ \begin{array}{l}
{{\ddot e}_{i,x}}\\
{{\ddot e}_{i,y}}
\end{array} \right] + \left[ \begin{array}{l}
{{\ddot x}_{id}}\\
{{\ddot y}_{id}}
\end{array} \right]} \right],
\end{equation}
which, together with $\ddot{e}_{i,x},\ddot{e}_{i,y}\in L_\infty$ drawn from Theorem 1 and $\bar{u}_{i,1}>0$ due to $k_{1,z}+k_{3,z}<g-|\ddot{z}_{id}|$, shows that $\tan\theta_i\in L_\infty$ and $\tan\phi_i\in L_\infty$. Hence, $\phi_i(t),\theta_i(t)\in(-\pi/2,\pi/2),\forall t \geq 0$.
\end{proof}
The proposition 1 illustrates the sufficient condition to avoid singularity when calculating the control laws $(\ref{ui14})(\ref{u23i})(\ref{lincotrn})$, that is, basically, ensuring $\ddot{e}_{i,x},\ddot{e}_{i,y}\in L_\infty$.
\begin{remark}
By $(\ref{sszz})$, further computation can obtain the roll and pitch angles of steady state, with trival tracking errors, as follows,
\begin{equation}\label{zxl2}
\left\{
\begin{split}
{\phi _i} &= \arctan  (- \frac{{{{\ddot y}_{id}}}}{{\sqrt {\ddot x_{id}^2 + {{\left( {g + {{\ddot z}_{id}}} \right)}^2}} }})\\
{\theta _i} &= \arctan \frac{{{{\ddot x}_{id}}}}{{g + {{\ddot z}_{id}}}}
\end{split}\right..
\end{equation}
\end{remark}
\begin{remark}
Note that the control inputs $(u_{i,2},u_{i,3})$ are directly designed with position error states rather than steering the roll and pitch to track their reference signals as traditional inner-outer loop methods do. One of such reference signals refers to the virtual roll and pitch angles $(\phi_{id},\theta_{id})$ solved according to
\begin{equation}
\begin{array}{l}
\left[ \begin{array}{l}
\tan {\theta _{id}}\\
\displaystyle\frac{{\tan {\phi _{id}}}}{{\cos {\theta _{id}}}}
\end{array} \right] = \displaystyle\frac{{R_2^{ - 1}\left( {{\psi _i}} \right){S^{ - 1}}}}{{{{\bar u}_{i,1}}}}\{ \left[ \begin{array}{l}
{{\ddot e}_{i,x}}\\
{{\ddot e}_{i,y}}
\end{array} \right] + \left[ \begin{array}{l}
{{\ddot x}_{id}}\\
{{\ddot y}_{id}}
\end{array} \right]\\
\;\;\;\;\;\;\;\;\;\;\;\;\;\;\;\;\;\;\;\;\;\;\;\;\;\;\;\;\;\;\;\;\;\;\;\;\;\;\; + {b_1}\left[ \begin{array}{l}
{e_{i,x}}\\
{e_{i,y}}
\end{array} \right] + {b_2}\left[ \begin{array}{l}
{{\dot e}_{i,x}}\\
{{\dot e}_{i,y}}
\end{array} \right]\}
\end{array}
\end{equation}
with suitable $b_1,b_2>0$. Steering $(\phi_i,\theta_i)$ to $(\phi_{id},\theta_{id})$ can then lead to $\left[\ddot{e}_{i,x},\ddot{e}_{i,y}\right]^T=-b_1\left[{e}_{i,x},{e}_{i,y}\right]^T-b_2\left[\dot{e}_{i,x},\dot{e}_{i,y}\right]^T$ and the position error $[e_{i,x},e_{i,y}]^T$ would converge to zero.
\end{remark}
\begin{remark}
Different form traditional feedback linearization results having fourteen states to be regulated \cite{RN863,RN858,RN874,RN855,RN848,RN843,RN856}, there are only twelve control states to be regulated in our design.
\end{remark}
\begin{remark}
By $(\ref{fbli})$, the latitudinal and longitudinal position errors are converted into two fourth-order integrators via viewing $(u_{i,x},u_{i,y})$ as control inputs. It is therefore able to design $(u_{i,x},u_{i,y})$ with the help of classical linear techniques, such as saturated method \cite{RN911,RN912}.
\end{remark}
\begin{remark}
The application of control law $(\ref{yawctrl})(\ref{ui14})(\ref{u23i})(\ref{lincotrn})$ on a real quad-rotor should consider more practical scenarios such as available thrust, mass and reference acceleration as well as other requirements. This note does not take these factors into account to make the main idea be presented in a concise manner.
\end{remark}
\subsection{Distributed Observer (Virtual Reference Trajectory)}
\noindent To achieve formation via a local tracking control law, as stated above, the virtual reference trajectory $p_{id}=[x_{id},y_{id},z_{id}]^T$ should have bounded derivatives and satisfy $\mathop {\lim }\limits_{t \to  + \infty } {\left[ {{x_{id}},{y_{id}},{z_{id}}} \right]^T} = {\left[ {{x_0},{y_0},{z_0}} \right]^T}$ and $\left| {{{\ddot z}_{id}}\left( t \right)} \right| < g$. Two problems obstruct the design of such reference trajectory, that is, only partial quad-rotors can know the leader's states and only neighboring interaction is available. Concerning these problems and the fact that $z_{id}$ needs an acceleration less than $g$ while there is no such restriction about $[x_{id},y_{id}]^T$, the designs of $(x_{id},y_{id})$ and $z_{id}$ are separately presented in two lemmas below.\\
Let $\zeta_{id}=[x_{id},y_{id}]^T$ and $\zeta_0=[x_0,y_0]^T$ and extend the second-order observer reported in our previous work \cite{RNylx}, propose the following lemma.
\begin{lemma}
Given Assumptions 2-3, the fourth-order dynamics described by
\begin{equation}\label{obsp}
\zeta _{id}^{\left( 4 \right)} =  - {g_3}\zeta _{id}^{\left( 3 \right)} - {g_2}{{\ddot \zeta }_{id}} - {g_1}{{\dot \zeta }_{id}} - {g_4}{c_i} - {Q_i}{c_i}
\end{equation}
with
\begin{equation}\label{papra}
\begin{split}
{c_i} &= {\left[ {{c_{i,x}},{c_{i,y}}} \right]^T}\\
 &= \sum\limits_{j = 1}^n {{a_{ij}}\left( {\zeta_{id}^{\left( 3 \right)} - \zeta_{jd}^{\left( 3 \right)}} \right)}  + {a_{i0}}\left( {\zeta_{id}^{\left( 3 \right)} - \zeta_0^{\left( 3 \right)}} \right)\\
 &~+ {g_3}\sum\limits_{j = 1}^n {{a_{ij}}\left( {{{\ddot \zeta}_{id}} - {{\ddot \zeta}_{jd}}} \right)}  + {g_3}{a_{i0}}\left( {{{\ddot \zeta}_{id}} - {{\ddot \zeta}_0}} \right)\\
 &~+ {g_2}\sum\limits_{j = 1}^n {{a_{ij}}\left( {{{\dot \zeta}_{id}} - {{\dot \zeta}_{jd}}} \right)}  + {g_2}{a_{i0}}\left( {{{\dot \zeta}_{id}} - {{\dot \zeta}_0}} \right)\\
 &~+ {g_1}\sum\limits_{j = 1}^n {{a_{ij}}\left( {{\zeta_{id}} - {\zeta_{jd}}} \right)}  + {g_2}{a_{i0}}\left( {{\zeta_{id}} - {\zeta_0}} \right)\\
{Q_i} &= \mathrm{diag}\left\{ {\left[ {\frac{{{g_{5,x}}}}{{\left| {{c_{i,x}}} \right| + \gamma {e^{ - \lambda t}}}},\frac{{{g_{5,y}}}}{{\left| {{c_{i,y}}} \right| + \gamma {e^{ - \lambda t}}}}} \right]} \right\},
\end{split}
\end{equation}
and gain selections
\begin{equation}\label{gains}
\begin{split}
g_2&>0,g_3>0,g_2g_3>g_1>0,g_4>0,\gamma>0,\lambda>0,\\
{g_{5,x}}& \ge \sigma_{0,x}:=\mathop {\sup }\limits_{t \ge 0}\left| {x_0^{(4)}+{g_3}x_0^{\left( 3 \right)} + {g_2}{{\ddot x}_0} + {g_1}{{\dot x}_0}} \right|,\\
{g_{5,y}}& \ge \sigma_{0,y}:=\mathop {\sup }\limits_{t \ge 0}\left| {y_0^{(4)}+{g_3}y_0^{\left( 3 \right)} + {g_2}{{\ddot y}_0} + {g_1}{{\dot y}_0}} \right|,
\end{split}
\end{equation}
ensures that $\zeta_{id},\dot{\zeta}_{id},\ddot{\zeta}_{id}$ and $\zeta^{(3)}_{id}$ are bounded, and
\begin{equation}\label{goa}
\begin{split}
\mathop {\lim }\limits_{t \to \infty } {\zeta_{id}} &= {\zeta_0},\mathop {\lim }\limits_{t \to \infty } {{\dot \zeta}_{id}} = {{\dot \zeta}_0},
\mathop {\lim }\limits_{t \to \infty } {{\ddot \zeta}_{id}} = {{\dot \zeta}_0},\mathop {\lim }\limits_{t \to \infty } \zeta_{id}^{\left( 3 \right)} = \zeta_0^{\left( 3 \right)}.
\end{split}
\end{equation}
\end{lemma}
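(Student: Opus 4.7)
The plan is to reduce the fourth-order closed-loop dynamics in~\eqref{obsp} to a first-order system in a composite error, apply a Lyapunov argument to show this error decays exponentially, and then recover convergence of $\zeta_{id}$ and its derivatives through a Hurwitz third-order ODE driven by that error. Since the two components of~\eqref{obsp} are uncoupled ($Q_i$ is diagonal and only the gains $g_{5,x},g_{5,y}$ differ), I will detail the $x$-component; the $y$-component is identical.

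Put $\tilde x_i:=x_{id}-x_0$ and define
\begin{equation}\label{plan-s}
s_{i,x}:=\tilde x_i^{(3)}+g_3\ddot{\tilde x}_i+g_2\dot{\tilde x}_i+g_1\tilde x_i,
\end{equation}
stacking the $s_{i,x}$ into $s_x\in\R^n$. A direct computation from~\eqref{papra}, using $(\mathcal{H}y)_i=\sum_j a_{ij}(y_i-y_j)+a_{i0}y_i$ and the cancellation of the $x_0^{(k)}$ terms in each difference, yields $c_{i,x}=(\mathcal{H}s_x)_i$, with $\mathcal{H}\succ 0$ by Assumption~3. Differentiating~\eqref{plan-s} along~\eqref{obsp} and writing $d_x(t):=x_0^{(4)}+g_3x_0^{(3)}+g_2\ddot x_0+g_1\dot x_0$, which satisfies $|d_x|\le g_{5,x}$ by~\eqref{gains}, I obtain
\begin{equation}\label{plan-dyn}
\dot s_x=-(g_4 I_n+Q_x)\,\mathcal{H}s_x-d_x(t)\mathbf{1},
\end{equation}
where $\mathbf{1}$ is the all-ones vector and $Q_x=\mathrm{diag}\{g_{5,x}/(|c_{i,x}|+\gamma e^{-\lambda t})\}$.

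Next I would take the Lyapunov candidate $V_x=\tfrac12 s_x^T\mathcal{H}s_x$, positive definite because $\mathcal{H}\succ 0$. A short computation along~\eqref{plan-dyn} gives
\begin{equation}\label{plan-dV}
\dot V_x=-g_4\|\mathcal{H}s_x\|^2-\sum_{i=1}^n\frac{g_{5,x}c_{i,x}^2}{|c_{i,x}|+\gamma e^{-\lambda t}}-d_x(t)\sum_{i=1}^n c_{i,x}.
\end{equation}
The key estimate is the pointwise identity
\begin{equation}\label{plan-trick}
g_{5,x}|c_{i,x}|-\frac{g_{5,x}c_{i,x}^2}{|c_{i,x}|+\gamma e^{-\lambda t}}=\frac{g_{5,x}|c_{i,x}|\gamma e^{-\lambda t}}{|c_{i,x}|+\gamma e^{-\lambda t}}\le g_{5,x}\gamma e^{-\lambda t},
\end{equation}
which together with $|d_x|\le g_{5,x}$ absorbs the sign-indefinite cross term and produces $\dot V_x\le -2g_4\lambda_{\min}(\mathcal{H})V_x+n g_{5,x}\gamma e^{-\lambda t}$. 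The comparison lemma then gives $V_x(t)\to 0$ exponentially, so $s_x(t)\to 0$ exponentially.

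Finally, under $g_1,g_2,g_3>0$ and $g_2g_3>g_1$, the Routh--Hurwitz criterion makes $\lambda^3+g_3\lambda^2+g_2\lambda+g_1$ Hurwitz, so reading~\eqref{plan-s} as $\tilde x_i^{(3)}+g_3\ddot{\tilde x}_i+g_2\dot{\tilde x}_i+g_1\tilde x_i=s_{i,x}(t)$ gives an exponentially stable linear ODE driven by the exponentially decaying input $s_{i,x}$. Hence $\tilde x_i$, $\dot{\tilde x}_i$, $\ddot{\tilde x}_i$, and $\tilde x_i^{(3)}$ stay bounded and converge to zero; repeating the argument with $g_{5,y}\ge\sigma_{0,y}$ handles the $y$-component, yielding~\eqref{goa}. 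The hardest step is~\eqref{plan-trick}: the nonsmooth compensator $-Q_i c_i$ must dominate the unknown drift $d(t)$ without introducing a singularity at $c_{i,x}=0$, and the regularizer $\gamma e^{-\lambda t}$ in the denominator is exactly what makes this trade-off work while preserving exponential convergence.
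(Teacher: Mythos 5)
Your proposal is correct and follows essentially the same route as the paper's Appendix A: the same composite error $s$, the identity $c=\mathcal{H}s$, the same Lyapunov function (your $\tfrac12 s_x^T\mathcal{H}s_x$ equals the paper's $\tfrac12 c^T(\mathcal{H}\otimes I_2)^{-1}c$ restricted to one component), the same absorption of the drift via the regularized term, the comparison lemma, and the final Hurwitz/cascade step. The only differences are notational (componentwise versus Kronecker-product form) and that you do not separate the $q=\lambda$ resonance case of the comparison estimate, which the paper does but which does not affect the conclusion.
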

\begin{proof}
See Appendix A.
\end{proof}
%The design of $\zeta_{id}$-dynamics uses reduce-order method for normal linear systems and acts as observer of leader's states. The time-varying yet continuous matrix $Q_i$ is used to overcome the problem that the leader's states are not available to all quad-rotors.\\
For $z_{id}$, we propose the following lemma,
\begin{lemma}
The second-order dynamics
\begin{equation}\label{zsds2}
{{\ddot z}_{id}} =  - {h_1}\tanh \left( {{{\dot z}_{id}} + {h_2}\left( {{z_{id}} - {z_{ia}}} \right)} \right) - {h_3}\tanh {{\dot z}_{id}}
\end{equation}
driven by
\begin{equation}\label{zidid}
\left\{ \begin{split}
{{\ddot z}_{ia}} &=  - {h_4}\left( {{z_{ia}} - {z_{ib}}} \right) - {h_5} {{\dot z}_{ia}}\\
{{\dot z}_{ib}} &=  - {h_6}\sum\limits_{j = 1}^n {{a_{ij}}\left( {{z_{ib}} - {z_{jb}}} \right) - {h_6}{a_{i0}}\left( {{z_{ib}} - {z_0}} \right)}
\end{split} \right.
\end{equation}
with $h_1,h_2,h_3,h_4,h_5,h_6>0$ and $h_1+h_3<g$, ensures that
\begin{enumerate}
  \item $|\ddot{z}_{id}(t)|<g$;
  \item $\mathop {\lim }\limits_{t \to  + \infty } {z_{id}} = {z_0}$.
\end{enumerate}
\end{lemma}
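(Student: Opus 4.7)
The dynamics \eqref{zsds2}--\eqref{zidid} form a three-level cascade: at the bottom, $z_{ib}$ obeys a first-order distributed consensus-type observer for $z_0$; in the middle, $z_{ia}$ is a stable linear second-order filter driven by $z_{ib}$; at the top, $z_{id}$ follows a Lemma~1--type saturated second-order system driven by $z_{ia}$. The plan is to dispose of item 1 by an immediate structural bound and to prove item 2 by propagating convergence from the bottom up.

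Item 1 falls out of the form of \eqref{zsds2}. Since $|\tanh(\cdot)|<1$ strictly,
\[
|\ddot z_{id}(t)|\le h_1|\tanh(\cdot)|+h_3|\tanh(\cdot)|<h_1+h_3<g,
\]
using the standing hypothesis $h_1+h_3<g$. No dynamic argument is needed.

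For item 2, I would first stack the $z_{ib}$-equations across $i\in\mathcal{N}$, let $\tilde z_b$ denote the column of errors $z_{ib}-z_0$, and use $\dot z_0=0$ (Assumption 2) to write $\dot{\tilde z}_b=-h_6\mathcal{H}\tilde z_b$. Since $\mathcal{H}\succ 0$ under Assumption 3, this yields $z_{ib}\to z_0$ and $\dot z_{ib}\to 0$ exponentially. The middle-layer error $\tilde z_{ia}=z_{ia}-z_0$ then satisfies the Hurwitz LTI equation $\ddot{\tilde z}_{ia}+h_5\dot{\tilde z}_{ia}+h_4\tilde z_{ia}=h_4\tilde z_{ib}$ driven by an exponentially decaying input, so $\tilde z_{ia},\dot z_{ia},\ddot z_{ia}\to 0$ exponentially. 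Finally, setting $e=z_{id}-z_{ia}$ gives
\[
\ddot e=-h_1\tanh(\dot e+h_2 e+\dot z_{ia})-h_3\tanh(\dot e+\dot z_{ia})-\ddot z_{ia},
\]
a vanishing-perturbation version of the Lemma 1 system. Combining $e\to 0$ with $\tilde z_{ia}\to 0$ then delivers $z_{id}\to z_0$.

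The main obstacle is this last step: extending Lemma 1 from the unperturbed case to the presence of exponentially decaying disturbances inside the saturating arguments and as an external additive term. The plan is first to use the uniform boundedness of $\ddot e$ (again by $|\tanh|<1$ and the boundedness of $\ddot z_{ia}$) together with a radially unbounded Lyapunov function associated with the Lemma 1 system to rule out finite-time escape and obtain a~priori bounds on $(e,\dot e)$; then, on the strength of these bounds, to invoke a standard vanishing-perturbation/Barb\u{a}lat argument---or, equivalently, an ISS property of the Lemma 1 dynamics with respect to the inputs $(\dot z_{ia},\ddot z_{ia})$---to conclude $e,\dot e\to 0$. The exponential decay established at the middle layer is what powers this final step and drives the top-layer error to zero.
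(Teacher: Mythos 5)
Your proposal is correct and follows essentially the same route as the paper: item 1 by the trivial bound $|\ddot z_{id}|\le h_1+h_3<g$, and item 2 by propagating GES of $z_{ib}-z_0$ up through the Hurwitz $z_{ia}$-filter and then treating the top layer as the Lemma~1 system plus an exponentially vanishing perturbation, closed via the cascade results you cite. The only difference is cosmetic: the paper takes the top-layer error as $\varepsilon_{i,1}=z_{id}-z_0$ (exploiting that $z_0$ is constant), so the perturbation reduces to a single $\tanh$ difference bounded via the mean value theorem by a multiple of $|z_{ia}-z_0|$ with no additive $\ddot z_{ia}$ term, and it then applies the Panteley--Lor\'ia cascade theorem with the explicit Lyapunov function $V=h_1\ln\cosh(\varepsilon_{i,2}+h_2\varepsilon_{i,1})+h_3\ln\cosh\varepsilon_{i,2}+\tfrac12 h_2\varepsilon_{i,2}^2$ --- precisely the vanishing-perturbation step you leave as a plan, and which goes through equally well in your coordinates since your perturbation is bounded by $(h_1+h_3)|\dot z_{ia}|+|\ddot z_{ia}|$ independently of the state.
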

\begin{proof}
The first claim can be verified directly by selecting gains so that $h_1+h_3<g$. To prove the second claim, we prove $z_{ib}\to z_0, z_{ia}\to z_0$ and $z_{id}\to z_0$ sequently. As reported in many classical literatures, see \cite{RN272} for example, it is direct to conclude that $z_{ib}-z_0$ and $\dot{z}_{ib}$ are GES with decaying rate being much related to the smallest eigenvalue of the matrix $\mathcal{H}$ \cite{RN848}. Therefore, by $\ddot{z}_{ia}=-h_4(z_{ia}-z_0)-h_5\dot{z}_{ia}+h_4(z_{ib}-z_0)$ derived from $(\ref{zidid})$, $z_{ia}-z_0$ converges to zero globally exponentially. For the second-order dynamics $(\ref{zsds2})$, define errors
$\varepsilon_{i,1}=z_{id}-z_0,\varepsilon_{i,2}=\dot{\varepsilon}_{i,1}$ with time-derivatives given by
\begin{equation}\label{deps}
\left\{ \begin{split}
{{\dot \varepsilon }_{i,1}} &= {\varepsilon _{i,2}}\\
{{\dot \varepsilon }_{i,2}} &=  - {h_1}\tanh \left( {{\varepsilon _{i,2}} + {h_2}{\varepsilon _{i,1}}} \right) - {h_3}\tanh {\varepsilon _{i,2}}+\Delta_i
\end{split} \right.,
\end{equation}
where $\Delta_i=- {h_1}(\tanh \left( {{\varepsilon _{i,2}} + {h_2}\left( {{z_{id}} - {z_{ia}}} \right)} \right)- \tanh ( {\varepsilon _{i,2}}$\\$+ {h_2}\left( {{z_{id}} - {z_0}} \right))$. By the mean value theorem, we have $|\Delta_i|<h_1|z_{ia}-z_0|$. Hence, $\Delta_i$ globally exponentially converges to zero. For the nominal part $\dot{\varepsilon}_{i,1}=\varepsilon_{i,2},\dot{\varepsilon}_{i,2}=-h_1\tanh(\varepsilon_{i,2}+h_2\varepsilon_{i,1})-h_3\tanh(\varepsilon_{i,2})$, one can choose a positive definite function by $V = {h_1}\ln \cosh \left( {{\varepsilon _{i,2}} + {h_2}{\varepsilon _{i,1}}} \right) + {h_3}\ln \cosh {\varepsilon _{i,2}} + \frac{1}{2}{h_2}\varepsilon _{i,2}^2$ \cite{RN842}, whose derivative is $\dot V =  - {\left[ {{h_1}\tanh \left( {{\varepsilon _{i,2}} + {h_2}{\varepsilon _{i,1}}} \right) + {h_3}\tanh {\varepsilon _{i,2}}} \right]^2} - {h_2}{h_3}{\varepsilon _{i,2}}\tanh {\varepsilon _{i,2}}<0$. Therefore, the nominal system associated with $(\ref{deps})$ is globally asymptotically stable, which, together with the cascade theory \cite{RN19,RN80} and the fact that $\Delta_i$ is GES, shows that $[\varepsilon_{i,1},\varepsilon_{i,2}]^T$ globally asymptotically converges to zero. The claim $\mathop {\lim }\limits_{t \to  + \infty } {z_{id}} = {z_0}$ follows.
\end{proof}
The observers proposed in Lemma 2 and Lemma 3 solve the problem that the leader's states are not available to all quad-rotors. They act as interactions among quad-rotors, and can be generated by on-board computer and transmitted by wireless modules. Moreover, the derivatives of the proposed observer are bounded up to the fourth order and satisfy $\ddot{z}_{id}<g$. It is therefore admissible to view the observers $(\ref{obsp})$$(\ref{zsds2})$ as virtual reference trajectory for each quad-rotor and apply the local control law stated in previous subsection.
\subsection{Brief discussion}
The theorem below concludes the formation scheme briefly.
\begin{theorem}
Given Assumptions 1-3, applying the local tracking control laws $(\ref{yawctrl})(\ref{ui14})(\ref{u23i})(\ref{lincotrn})$ and distributed observers $(\ref{obsp})(\ref{zsds2})$ on a team of quad-rotors described by $(\ref{SimpModel})$ achieves
\begin{equation}\label{sdszx}
\mathop {\lim }\limits_{t \to \infty } \left[ \begin{array}{l}
{x_i} - {x_0}\\
{y_i} - {y_0}\\
{z_i} - {z_0}
\end{array} \right] = \left[ \begin{array}{l}
{d_{i,x}}\\
{d_{i,y}}\\
{d_{i,z}}
\end{array} \right],\forall i \in \mathcal{N}.
\end{equation}
\end{theorem}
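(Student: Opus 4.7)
The plan is to treat the overall closed loop as a cascade: the distributed observers supply each follower with a virtual reference $p_{id}=[x_{id},y_{id},z_{id}]^T$ whose first four derivatives are well behaved, and the local controller of Section III.A then drives each follower onto that virtual reference. Concretely, I would split the argument into three stages, invoking Lemma 2, Lemma 3, and Theorem 1 in turn, and then patching them together through the identity
\begin{equation*}
x_i-x_0 = (x_i-x_{id}-d_{i,x}) + (x_{id}-x_0) + d_{i,x},
\end{equation*}
together with the corresponding identities for $y_i-y_0$ and $z_i-z_0$.

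First I would verify that the virtual reference generated by $(\ref{obsp})$ and $(\ref{zsds2})$ satisfies the hypotheses needed to invoke Theorem 1. For the horizontal component $\zeta_{id}=[x_{id},y_{id}]^T$, Lemma 2 guarantees $\zeta_{id},\dot{\zeta}_{id},\ddot{\zeta}_{id},\zeta_{id}^{(3)}\in L_\infty$ and $\zeta_{id}\to\zeta_0$ together with convergence of the three derivatives to those of $\zeta_0$. Boundedness of the fourth derivative then follows by reading $(\ref{obsp})$ as an algebraic expression for $\zeta_{id}^{(4)}$ in terms of these bounded quantities (and the bounded leader derivatives from Assumption 2). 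For the altitude component, Lemma 3 already gives $|\ddot z_{id}(t)|<g$ by construction (via the choice $h_1+h_3<g$) and $z_{id}\to z_0$; bounds on $z_{id}^{(3)}$ and $z_{id}^{(4)}$ follow by differentiating $(\ref{zsds2})$ and using boundedness of $\dot z_{id},\ddot z_{id}$ together with the boundedness of $z_{ia},\dot z_{ia},z_{ib},\dot z_{ib}$ coming from the linear cascade in $(\ref{zidid})$ driven by the constant $z_0$.

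With these properties in hand, Theorem 1 applies verbatim to each quad-rotor: the local errors $e_{i,x},e_{i,y},e_{i,z}$ in $(\ref{e1e2e3})$ converge to zero. Combining this with $x_{id}\to x_0$, $y_{id}\to y_0$, $z_{id}\to z_0$ in the decomposition above yields $x_i-x_0\to d_{i,x}$, $y_i-y_0\to d_{i,y}$, $z_i-z_0\to d_{i,z}$, which is exactly $(\ref{sdszx})$.

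The main obstacle I anticipate is the boundedness and regularity bookkeeping needed to legitimately feed the observer output into Theorem 1: the local control law $(\ref{u23i})$ explicitly contains $x_{id}^{(4)}$ and $y_{id}^{(4)}$, the intermediate signal $\bar u_{i,1}$ uses $\ddot z_{id}$, and the singularity-avoidance argument behind Proposition 1 relies on $\bar u_{i,1}>0$, which here in turn rests on $|\ddot z_{id}(t)|<g$ rather than on an a~priori assumption about the leader. Verifying all of these from $(\ref{obsp})$, $(\ref{zidid})$, $(\ref{zsds2})$, Assumption 2, and the gain constraints $(\ref{gains})$ and $h_1+h_3<g$ is routine but must be done carefully to close the cascade argument without a circular dependence between the observer and the local loop.
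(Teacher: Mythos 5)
Your proposal is correct and follows essentially the same route as the paper: decompose the formation error into the local tracking error $p_i-p_{id}-\Delta_i$ plus the observer error $p_{id}-p_0$, then invoke Theorem 1, Lemma 2 and Lemma 3 to drive each piece to zero. The regularity bookkeeping you flag (boundedness of $p_{id}$ up to fourth order and $|\ddot z_{id}|<g$) is exactly what the paper disposes of in the remark following Lemma 3 before stating Theorem 2, so your version is if anything slightly more explicit about why Theorem 1 is applicable.
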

\begin{proof}
%According to Lemmas 2-3, we know that $p_{id}(t)$ is globally bounded with $|\ddot{z}_{id}|<g$ and $\mathop {\lim }\limits_{t \to \infty } {p_{id}} = {p_0}$. Furthermore, the Theorem 1 ensures that the error $p_i-p_{id}-\Delta_i$ globally uniformly converges to zero.
The actual formation error $p_i-p_0-\Delta_i$ satisfies
\begin{equation}\label{sadaz}
\begin{split}
\left\| {{p_i} - {p_0} - {\Delta _i}} \right\| &= \left\| {{p_i} - {p_{id}} - {\Delta _i} + {p_{id}} - {p_0}} \right\|\\
 &\le \left\| {{p_i} - {p_{id}} - {\Delta _i}} \right\| + \left\| {{p_{id}} - {p_0}} \right\|,
\end{split}
\end{equation}
which, together with Theorem 1, Lemma 2 and Lemma 3, shows that $\mathop {\lim }\limits_{t \to \infty } {p_i} - {p_0} - {\Delta _i} = {\left[ {0,0,0} \right]^T}$. The claim $(\ref{sdszx})$ follows.
\end{proof}
\begin{remark}
Inspired by the analysis above, one can solve other cooperative problems of quad-rotor systems (such as time-varying formation and communication delay) via modifying associated control protocols developed for linear integrators into virtual reference trajectories meeting the requirements of our local controller.
\end{remark}
\begin{remark}
The formation scheme is distributed since only neighboring communication is available. It is therefore direct to add cooperative quad-rotors as required.
\end{remark}
\section{Numerical Simulation}
To validate the proposed distributed formation control algorithm we employ a scenario for simulation, concerning four quad-rotors tracking a leader while performing a fixed square pattern. Without loss of generality, the undirected interaction network is described by the figure below.
\begin{figure}[H]
  \centering
\setlength{\abovecaptionskip}{0.cm}
\includegraphics[width=0.20\columnwidth]{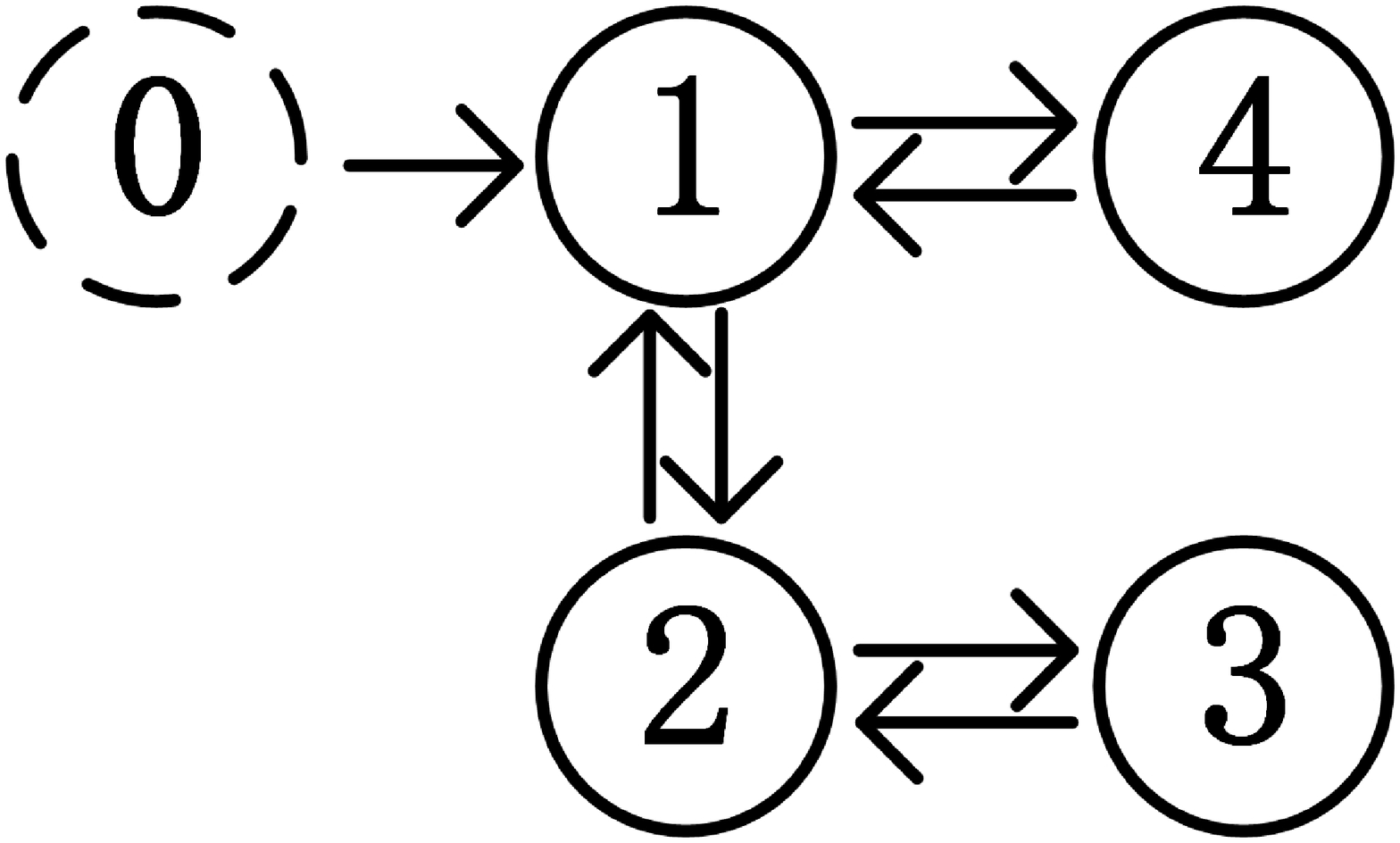}
\caption{The interaction network.}
\label{fig1}
\end{figure}
%The matrix $\mathcal{H}$ associated with above interaction graph can be given by
%\begin{equation}\label{realH}
%\mathcal{H}{\rm{ = }}\left[ {\begin{array}{*{20}{c}}
%{\rm{3}}&{ - 1}&0&{ - 1}\\
%{ - 1}&2&{ - 1}&0\\
%0&{ - 1}&1&0\\
%{ - 1}&0&0&1
%\end{array}} \right].
%\end{equation}
To form a fixed square pattern, define four constant vectors by $\Delta_1 =[20,20,0]^T, \Delta_2=[-20,20,0]^T,\Delta_3 =[-20,-20,0]^T,\Delta_4=[20,-20,0]^T$.
Two cases are considered as follows,
\begin{itemize}
  \item \textbf{Case1:}~~${p_0}\left( t \right) = {\left[ {100\sin 0.1t, - 100\cos 0.1t,100} \right]^T}$,
  \item \textbf{Case2:}~~${p_0}\left( t \right) = {\left[ {0,0,50} \right]^T}$.
\end{itemize}
For convenience, we select the identical initial states by$
{x_1}\left( 0 \right) =  - 10,{y_1}\left( 0 \right) = 12,{z_1}\left( 0 \right) = 0,{\phi _1}\left( 0 \right) = 0,{\theta _1}\left( 0 \right) = 0,{\psi _1}\left( 0 \right) = \pi /8,{x_2}\left( 0 \right) = 40,{y_2}\left( 0 \right) =  - 12,{z_2}\left( 0 \right) = 5,{\phi _2}\left( 0 \right) = 0,{\theta _2}\left( 0 \right) = 0,{\psi _2}\left( 0 \right) = \pi /2,{x_3}\left( 0 \right) = 20,{y_3}\left( 0 \right) = 10,{z_3}\left( 0 \right) = 6,{\phi _3}\left( 0 \right) = 0,{\theta _3}\left( 0 \right) = 0,{\psi _3}\left( 0 \right) = \pi ,{x_4}\left( 0 \right) =  - 20,{y_4}\left( 0 \right) = 45,{z_4}\left( 0 \right) = 7,{\phi _4}\left( 0 \right) = 0,{\theta _4}\left( 0 \right) = 0,{\psi _4}\left( 0 \right) = \pi /5.$ The control gains are also selected to be identically for two cases as ${k_{1,z}} = 1,{k_{2,z}} = 0.5,{k_{3,z}} = 0.5,
{k_{1,x}} = 0.2,{k_{2,x}} = 1.6,{k_{3,x}} = 3.6,{k_{4,x}} = 3.2,{k_{1,y}} = 0.2,{k_{2,y}} = 1.6,{k_{3,y}} = 3.6,{k_{4,y}} = 3.2,{k_{1,\psi }} = 0.5,{k_{2,\psi }} = 0.5,
{g_1} = 0.125,{g_2} = 0.75,{g_3} = 0.85,{g_4} = 0.1,
{g_{5,x}} = 2.1,{g_{5,y}} = 2.1,\gamma  = 15,\lambda  = 0.1,
{h_1} = 0.5,{h_2} = 0.5,{h_3} = 0.5,
{h_4} = 0.5,{h_5} = 0.5,{h_6} = 1.$ To start the simulation, the initial values of the virtual trajectory is chosen as $\zeta_{id}=[x_{id}(0),y_{id}(0)]^T=[x_i(0),y_i(0)]^T$ and $z_{id}(0)=z_i(0)$ with $i=\{1,2,3,4\}$ and their derivatives are supposed to be zero.\\
Three sub-figures are depicted for each case with sub-figure (a) being the geometric position paths of quad-rotors, sub-fugure (b) being the norm of formation error $p_i-p_0-\Delta_i$ and sub-figure (c) being the attitude angles. The simulation results of these two cases are shown in Figure 4 and Figure 5 respectively. It can be seen from sub-figure (a) of both two cases that the follower quad-rotors form the square pattern while tracking the leader with the predefined position displacements. The sub-figure (b) shows that the formation error norm is asymptotically convergent. The attitudes of all followers are kept in reasonable ranges that can be demonstrated by sub-figure (c). All of the simulation results illustrate the effectiveness of the proposed formation algorithm.
\section{Conclusion}
This note solves the leader-follower formation problem for multiple quad-rotors. The whole formation control scheme involves two parts, namely, a local tracking control law and a distributed observer. Given a smooth reference trajectory with bounded derivatives, a novel local tracking control law is proposed with the help of non-regular feedback linearization approach. In view of the fact that the leader's states are not available to all followers, we propose a distributed observer and feed it into the local tracking control law via viewing it as virtual reference trajectory. As for future research, we will take into account more practical problem associated quad-rotor formation such as communication failure, wind turbulence and inter-agent collision avoidance.
\begin{figure}[H]
\centering     %%% not \center
\subfigure[Geometric position paths(*:Start point).]{\label{fig:a}\includegraphics[width=0.8\columnwidth]{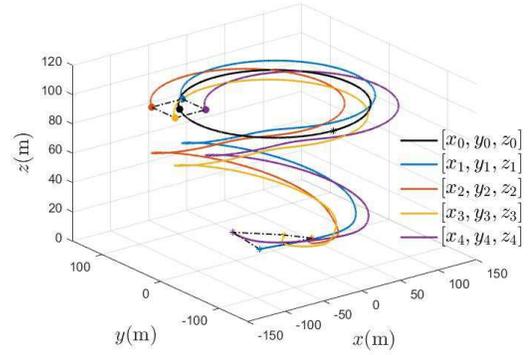}}
\subfigure[The norm of formation errors.]{\label{fig:b}\includegraphics[width=.80\columnwidth]{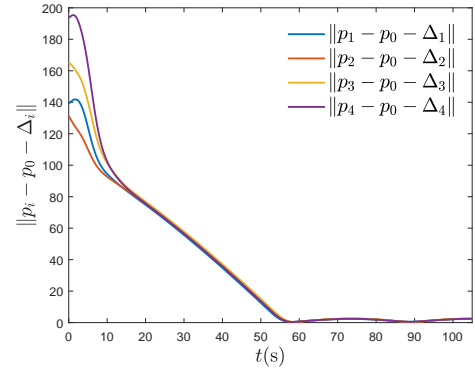} }\\
\subfigure[The attitude angles of each quad-rotor(Unit:degree).]{\label{fig:b}\includegraphics[width=.90\columnwidth]{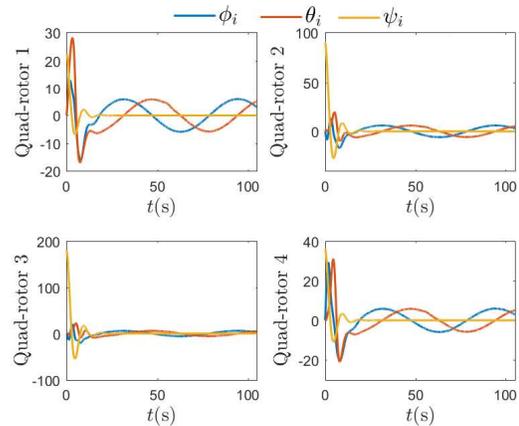} }\\
\caption{The simulation results of \textbf{Case 1}.}
\end{figure}
\begin{figure}[H]
\centering     %%% not \center
\subfigure[Geometric position paths(*:Start point).]{\label{fig:a}\includegraphics[width=0.8\columnwidth]{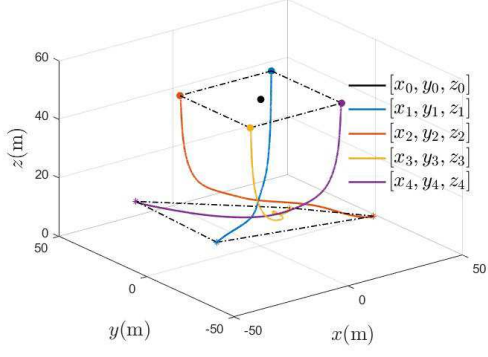}}
\subfigure[The norm of formation errors.]{\label{fig:b}\includegraphics[width=.8\columnwidth]{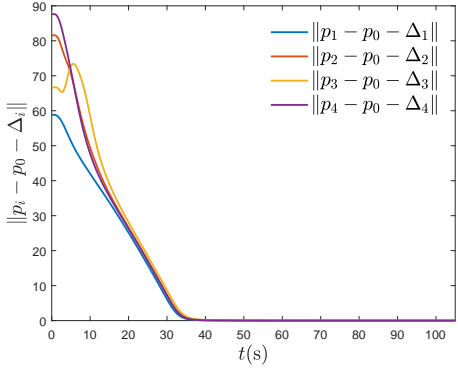} }\\
\subfigure[The attitude angles of each quad-rotor(Unit:degree).]{\label{fig:b}\includegraphics[width=.90\columnwidth]{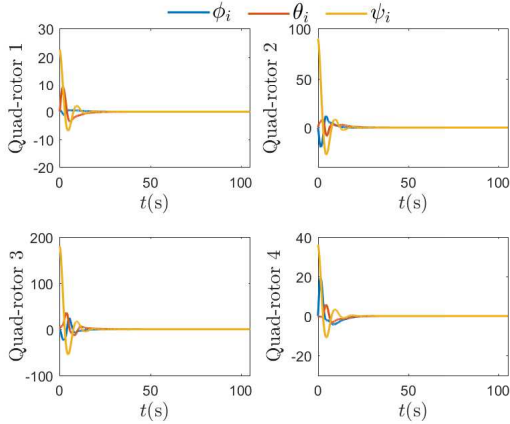} }\\
\caption{The simulation results of \textbf{Case 2}.}
\end{figure}
\section*{Acknowlegement}
{This work was supported by National Natural Science Foundation of China under Grant 61573034 and Grant 61327807.}
\section*{Appendix A: Proof of Lemma 2}
\begin{proof}
Define
\begin{equation}\label{xixi}
\begin{split}
\xi_i&=\zeta_{id}-\zeta_0\in\mathbb{R}^2,s_i=\xi_i^{(3)}+g_3\ddot{\xi}_i+g_2\dot{\xi}_i+g_1\xi_i\in\mathbb{R}^2,
\end{split}
\end{equation}
and note that $s_i=0$ leads to $\xi_i^{(3)}=-g_3\ddot{\xi}_i-g_2\dot{\xi}_i-g_1\xi_i$. As a result, by gains $g_2>0,g_3>0,g_2g_3>g_1>0$, the objective $(\ref{goa})$ can be realized. Hence, let us prove $s_i\to 0,\forall i \in \mathcal{N}$ and define some helpful $2n$-dimensional vectors by
\begin{equation}\label{cxi}
\begin{split}
   c & =[c_1^T,c_2^T,...,c_n^T]^T,\xi =[\xi_1^T,\xi_2^T,...,\xi_n^T]^T,s=[s_1^T,s_2^T,...,s_n^T]^T.
\end{split}
\end{equation}
Direct computations result with
\begin{equation}\label{ccssx}
\begin{split}
c &= \left( {\mathcal{H} \otimes {I_2}} \right)s,\\
s&=\xi^{(3)}+g_3\ddot{\xi}+g_2\dot{\xi}+g_1\xi,
\end{split}
\end{equation}
where '$\otimes$' denotes Kronecker product. By Assumption 3, the matrix $\mathcal{H}$ is symmetric, invertible and positive definite \cite{RN531}, and hence, so is $\mathcal{H}\otimes I_2$. The convergence $c\to 0_{2n}$ is equivalent with $s\to 0_{2n}$. Choose a positive definite function
\begin{equation}\label{Lyusa}
W=0.5c^T(\mathcal{H}\otimes I_2)^{-1}c,
\end{equation}
whose derivative can be obtained as $\dot{W}= c^T(\mathcal{H}\otimes I_2)\dot{c}=c^T\dot{s}$. By the fact
\begin{equation}\label{dotssss}
\begin{split}
\dot s &= {\zeta_{id}^{\left( 4 \right)}} - {1_n} \otimes \zeta _0^{\left( 4 \right)} + {g_3}\left( {\zeta _{id}^{\left( 3 \right)} - {1_n} \otimes \zeta _0^{\left( 3 \right)}} \right)+ {g_2}\left( {{{\ddot \zeta }_{id}} - {1_n} \otimes {{\ddot \zeta }_0}} \right)\\
&~ + {g_1}\left( {{{\dot \zeta }_{id}} - {1_n} \otimes {{\dot \zeta }_0}} \right)\\
% &= \zeta^{(4)}  - {1_n} \otimes \zeta _0^{\left( 4 \right)} + {g_3}\left( {\delta  - {1_n} \otimes \zeta _0^{\left( 3 \right)}} \right)\\
% &~+ {g_2}\left( {\ddot{\zeta}  - {1_n} \otimes {{\ddot \zeta }_0}} \right) + {g_1}\left( {\dot{\zeta}  - {1_n} \otimes {{\dot \zeta }_0}} \right)\\
 &=  - {g_4}c - Qc - {1_n} \otimes \left( {\zeta _0^{\left( 4 \right)} + {g_3}\zeta _0^{\left( 3 \right)} + {g_2}{{\ddot \zeta }_0} + {g_1}{{\dot \zeta }_0}} \right),
\end{split}
\end{equation}
with $Q=\textrm{diag}\{Q_1,Q_2,...,Q_n\}\in \mathbb{R}^{2n\times 2n}$, rewrite $\dot W$ by,
\begin{equation}\label{Lyause}
\begin{split}
\dot{W}&=-g_4c^Tc-c^TGc-c^T\mathbf{1}_n\otimes(\zeta_0^{(4)}+g_3 \zeta_0^{(3)}+g_2 \ddot{\zeta}_0+g_1 \dot{\zeta}_0)\\
        &=-g_4\|c\|^2 \\
       &~- \sum\limits_{i = 1}^n {\frac{{{g_{5,x}}c_{i,x}^2}}{{\left| {{c_{i,x}}} \right| + \gamma {e^{ - \lambda t}}}}}
       -\sum\limits_{i = 1}^n {{c_{i,x}}\left( {x_0^{(4)}+{g_3}x_0^{\left( 3 \right)} + {g_2}{{\ddot x}_0} + {g_1}{{\dot x}_0}} \right)}  \\
       &~- \sum\limits_{i = 1}^n {\frac{{{g_{5,y}}c_{i,y}^2}}{{\left| {{c_{i,y}}} \right| + \gamma {e^{ - \lambda t}}}}}
       -\sum\limits_{i = 1}^n {{c_{i,y}}\left( {y_0^{(4)}+{g_3}y_0^{\left( 3 \right)} + {g_2}{{\ddot y}_0} + {g_1}{{\dot y}_0}} \right)}\\
       &\leq-g_4\|c\|^2 - \sum\limits_{i = 1}^n {\left( {\frac{{{g_{5,x}}c_{i,x}^2}}{{\left| {{c_{i,x}}} \right| + \gamma {e^{ - \lambda t}}}} - \left| {{c_{i,x}}} \right|{\sigma _{0,x}}} \right)} \\
        &~- \sum\limits_{i = 1}^n {\left( {\frac{{{g_{5,y}}c_{i,y}^2}}{{\left| {{c_{i,y}}} \right| + \gamma {e^{ - \lambda t}}}} - \left| {{c_{i,y}}} \right|{\sigma _{0,y}}} \right)},
       %&\leq - {g_4}{\left\| c \right\|^2} + \gamma n\left( {{\sigma _{0,x}} + {\sigma _{0,y}}} \right){e^{ - \lambda t}}.
\end{split}
\end{equation}
which, combined with the fact $W\geq \displaystyle\frac{\|c\|^2}{2\lambda_{\max}(\mathcal{H})}$, implies
\begin{equation}\label{dLsat}
\begin{split}
\dot{W}&\leq -{2g_4}{\lambda_{\min}(\mathcal{H})}W+\gamma n(\sigma_{0,x}+\sigma_{0,y})e^{-\lambda t}\\
       &= -qW+\sigma_0e^{-\lambda t},
\end{split}
\end{equation}
where $q:={2g_4}{\lambda_{\min}(\mathcal{H})},\sigma_0:=\gamma n(\sigma_{0,x}+\sigma_{0,y})$ and inequality $\displaystyle\frac{a_1x^2}{|x|+a_2}-a_3|x|\geq -a_2a_3$, with real numbers $a_1\geq a_3\geq 0$ and $a_2>0$, is applied. By comparison principle \cite{NLS}, integrating both sides of $(\ref{dLsat})$ results
\begin{equation}\label{Lsol}
W\left( t \right) \le \left\{ \begin{split}
{W_1}\left( t \right) &= {e^{ - qt}}W\left( 0 \right) + {\sigma _0}\frac{{{e^{ - qt}} - {e^{ - \lambda t}}}}{{\lambda  - q}},~\text{if}~q \neq \lambda; \\
{W_2}\left( t \right) &= {e^{ - qt}}W\left( 0 \right) + {\sigma _0}t{e^{ - qt}},~\text{if}~q = \lambda.
\end{split} \right.
\end{equation}
For $W_1(t)$, it satisfies
\begin{equation}\label{L1S}
{W_1}\left( t \right) \le {e^{ - qt}}W\left( 0 \right) + \frac{{2\sigma _0}}{{\left| {\lambda  - q} \right|}}{e^{ - \min \left\{ {q,\lambda } \right\}t}}.
\end{equation}
For $W_2(t)$, by the fact $te^{-a t} \leq \displaystyle\frac{1}{ae},\forall t\geq 0, a=0.5\lambda$, it satisfies
\begin{equation}\label{L2S}
{W_2}\left( t \right) \le {e^{ - qt}}W\left( 0 \right) + \frac{2\sigma _0}{{\lambda e}}{e^{ - \frac{\lambda }{2}t}}
\end{equation}
Hence, both $W_1$ and $W_2$ converge to zero globally exponentially, which means that $c$ and $s$ are globally exponentially stable (GES). By $(\ref{ccssx})$, one has $\xi^{(3)}=-g_3\ddot{\xi}-g_2\dot{\xi}-g_1\xi+s$. The vectors $\xi,\dot{\xi},\ddot{\xi}$ and $\xi^{(3)}$ are GES due to fact that both $\xi^{(3)}=-g_3\ddot{\xi}-g_2\dot{\xi}-g_1\xi$ and $s$-dynamics are GES\cite{RN19,RN80}. Therefore, the claims $(\ref{goa})$ follow.
\end{proof}

\end{document}